\documentclass[journal]{IEEEtran}

\usepackage{graphicx}
\usepackage{amsmath}
\usepackage{cases}
\usepackage{array}
\usepackage{subcaption}
\usepackage{array}
\usepackage{lipsum,color}
\usepackage{epstopdf}
\usepackage{graphicx}
\usepackage{wrapfig}
\usepackage{mathrsfs}
\usepackage{tabularx}
\usepackage{multirow} 
\usepackage{graphicx}
\usepackage{cite}
\usepackage{hyperref}
\usepackage{algorithm,algorithmic,amsmath}
\usepackage{mathtools}
\usepackage{dsfont}
\usepackage{amsthm,amssymb} 
\usepackage{textcomp}


\DeclareMathOperator{\E}{\mathbb{E}}

\newtheorem{theorem}{Theorem}
\newtheorem{proposition}{Proposition}

\DeclareGraphicsExtensions{.pdf,.png,.jpg,.eps,.ps}

\begin{document}	
	\title{Channel Estimation, Interference Cancellation, and Symbol Detection for 	Communications on Overlapping Channels}	
	\author{Minh Tri Nguyen and Long Bao Le\\
		INRS, University of Quebec, Montreal, QC, Canada. E-mails: \{minh.tri.nguyen, long.le\}@emt.inrs.ca
		\thanks{The article has been accepted for publication in IEEE Access in May 2020.}
        \thanks{{\sffamily\textcopyright} 2020 IEEE.  Personal use of this material is permitted.  Permission from IEEE must be obtained for all other uses, in any current or future media, including reprinting/republishing this material for advertising or promotional purposes, creating new collective works, for resale or redistribution to servers or lists, or reuse of any copyrighted component of this work in other works.}
        \thanks{The DOI of the accepted article is: 10.1109/ACCESS.2020.2993582}}
	\maketitle 
	
\begin{abstract}
In this paper, we propose the joint interference cancellation, fast fading channel estimation,
and data symbol detection for a
general interference setting where the interfering source and
the interfered receiver are unsynchronized and
occupy overlapping channels of different bandwidths.
The interference must be canceled before the channel
estimation and data symbol detection of the desired communication are performed.
To this end, we have to estimate the Effective Interference Coefficients (EICs)
and then the desired fast fading channel coefficients.
We construct a two-phase framework
where the EICs and desired channel coefficients are estimated
using the joint maximum likelihood-maximum a posteriori probability
(JML-MAP) criteria in the first phase;
and the MAP based data symbol detection is performed in the second phase.
Based on this two-phase framework, we also propose an iterative algorithm
for interference cancellation, channel estimation and data detection.
We analyze the channel estimation error, residual interference,
symbol error rate (SER) achieved by the proposed framework.
We then discuss how to optimize the pilot density to achieve the maximum throughput.
Via numerical studies, we show that our design can effectively mitigate the interference
for a wide range of SNR values,
our proposed channel estimation and symbol detection design
can achieve better performances compared to the existing method.
Moreover, we demonstrate the improved performance of the iterative algorithm 
with respect to the non-iterative counterpart.
\end{abstract}

\begin{IEEEkeywords}
Interference cancellation, fast fading, symbol detection, and channel estimation.
\end{IEEEkeywords}
 
\IEEEpeerreviewmaketitle
 
\section{Introduction}
\label{sec:introduction}

Traffic
 demand from wireless networks has been increasing dramatically
over the last decades while
the
spectrum resource is limited.
This has motivated
the
development of efficient and flexible spectrum utilization
and sharing techniques.
Moreover, future wireless networks are expected to support
a massive number of connections to enable many emerging applications
requiring diverse communication rates and
qualities
of service \cite{naik2018coexistence}.
Therefore, effective spectrum reuses using robust interference cancellation
and management are essential in maintaining and enhancing
the communication rates and reliability in next-generation wireless systems \cite{lee2016advanced}.
In particular, future wireless systems
must be able to support
different applications and use cases,
e.g., highly mobile scenarios
in which users move at high speeds (up to 500 km/hr)
\cite{5GvisionSamsung, 5GvisionEricsson, shah20185g}.
Thus, developing wireless communication techniques for high mobility environments is of high importance and
has
attracted increasing research attention
\cite{3GPP_15, wu2016survey, noh2019realizing}.

Non-Orthogonal Multiple Access (NOMA) \cite{cai2017modulation} and 
Full Duplex (FD) communication \cite{bharadia2013full} are among the advanced frequency reuse techniques.
In NOMA, signals from different sources are allowed to be transmitted simultaneously over
the same channel, and successive interference cancellation is typically employed to decode these messages.
Moreover, a FD transceiver allows to transmit and receive at the same time over the same channel, thus, 
the receiver experiences severe self-interference from the transmitter.
As a result, advanced interference  cancellation techniques are required to realize a practical FD system
where combined analog and digital interference cancellation strategies are usually employed to achieve sufficient 
cancellation performance \cite{sabharwal2014band}.

Note, however, that FD communication has a special interference structure
where the interfering and interfered communications have the same bandwidth (hence, the same symbol rate). 
This interference structure plays a crucial role in designing interference cancellation techniques, especially in 
the digital domain \cite{ahmed2015all,korpi2014widely}.
Interference cancellation in the more general scenario where the interfering source and victim
have different bandwidths is more challenging to tackle
because of the following reasons.
First, the equivalent interference
coefficients (EIC) \cite{Nguyen2017} vary from symbol to symbol and they are
difficult to capture. Second, when operating over different
bandwidths, these concurrent communications are likely not synchronized, which
creates a fundamental limitation in cancellation performance \cite{syrjala2014analysis}.

Various interference cancellation techniques, including
passive interference cancellations \cite{everett2014passive}, active interference cancellations in the analog domain 
\cite{duarte2012experiment, debaillie2014analog} and in the digital domain \cite{korpi2014widely, ahmed2015all},
have been proposed for full-duplex systems. However, only a few works study interference cancellation
for the concurrent communications with different bandwidths even though
this interfering scenario can arise in both terrestrial communications \cite{pedersen2018agile}
	and satellite communications \cite{FCC_sat}.
In fact, this interference scenario occurs between the Iridium satellite system operating in the band 1621.35 - 1626.5 MHz and the
	Inmarsat satellite system operating in the adjacent band 1626.5 - 1660.5 MHz, as reported in \cite{euro_report}.
Thus, development of robust interference cancellation methods that 
effectively address the general interference scenario between two communications of
	different bandwidths is highly important.

Interference cancellation for communications with different bandwidths  
has been investigated in some previous works \cite{schwarzenbarth2007,Nguyen2017} assuming perfect CSI
and/or synchronization between the underlying communications.
The problem becomes much more challenging
when the desired channel experiences the fast fading
where the time-varying channel can be modeled by using the Gauss-Markov process \cite{wang1996verifying, bottomley1998unification, tan2000first, zhang2013blind, wu2016survey, zhu2009message}.
For the fast fading channel,
MMSE-based channel estimators are derived in \cite{sun2014maximizing, ma2003optimal}
requiring the knowledge of the channel correlation matrix,
which may not be readily usable
in the presence of interference.
Therefore, it is highly desirable to develop robust interference cancellation techniques that can effectively
cope with a strong interfering signal with different bandwidth from the victim in the fast fading environment.

Data symbol detection in the fast fading environment is another challenging task, especially with the presence of strong
 interference.
 A well-known approach for symbol detection in fast fading environments is
 the message-passing detection technique in which 
the posterior probability of data symbols is estimated.
In \cite{zhu2009message}, it is shown that this detection technique can function well
if the interfering signal has similar characteristics with the desired signal.
However, the method works well only
if the interfering and desired signals are synchronized and have the same symbol rate.
Furthermore, an approximated distribution of data symbols
by the Gaussian mixture with a limited number of terms
may yield unacceptable error rate with a large signal constellation size.
Another approach is considered in  \cite{mahamadu2018fundamental} where the channel gains at data symbols are interpolated 
by the imperfect CSI at pilot symbols.
Then, the zero-forcing based symbol detection is employed,
the technique is called optimum diversity detection (ODD).
However, this detection technique does not fully exploit the correlations of channel gains at consecutive data symbols,
and the required inverse matrix operations result in high computational complexity.
This motivates us to develop a new detection strategy that
has low complexity and can achieve the performance close to that of the ODD technique.

The above survey suggests that joint
channel estimation, interference cancellation, and symbol detection
for the scenario in which
two un-synchronized mutual interfering signals have different bandwidths in the fast fading environment
has been under-explored. This paper aims to fill this gap in the literature where we make the following contributions.

\begin{itemize}
	\item  Firstly, a two-phase framework for joint interference cancellation, channel estimation, and symbol detection is proposed.
	In the first phase, the EICs are estimated and the interference is subtracted.
	Then, fast-fading channel coefficients at pilot positions are estimated.
	In the second phase, 
	we derive the \textit{a posteriori probabilities} for both series and individual symbols, given the channel coefficients at pilot positions.
	Based on these probabilities, we propose corresponding detection methods.
	Specifically, our series symbol detection (S-MAP) outperforms the existing ODD technique \cite{mahamadu2018fundamental}
	while our individual symbol detection (I-MAP) achieves almost identical result to the ODD technique with much lower
	complexity as confirmed by numerical studies.
	
	\item Secondly, based on the proposed two-phase framework, we propose an iterative algorithm for 
	interference cancellation, channel estimation, and data detection. Numerical studies show
	that the proposed iterative algorithm converges quite quickly and it performs better than the non-iterative counterpart.
		
	\item Thirdly, we analyze the residual interference and symbol error rate achieved by the proposed non-iterative algorithm.
	Specifically, we provide an exact expression for channel estimation error in the interference-free scenario,
	and an approximated residual interference and channel estimation error for the case with interference.
	The analysis shows that the residual interference has bounded power as the interference power tends to infinity.
	However, the effect of the fast fading channel to the residual interference is irreducible
	no matter how large the SNR or the number of pilot symbols is.
	Hence, there are fundamental  floors for the channel estimation and symbol detection performances.
	
	\item Finally, we conduct simulation studies and draw
	several insightful observations from the results. 
	Particularly, the performance floor exists for the considered interference scenario while 
	it is not the case for the interference free scenario.
	It is also shown that the existing symbol detection method
	may need more than 3dB increment in SNR to achieve the same symbol error rate (SER) obtained from our S-MAP method,
	while our I-MAP method achieves very close performance to the existing optimum detection method.
	Finally, we show that there exists an optimal frame structure (i.e., optimal pilot density) to achieve the 
	maximum system 	throughput.
\end{itemize}

While preliminary results of this paper were published in \cite{nguyen2018channel},
the current paper makes several significant contributions
compared to this conference version.
Specifically, the current journal paper proposes two detection methods
with improved performances compared to
the method introduced in the conference version.
The new iterative algorithm is also proposed in this journal version.
The theoretical performance analysis and throughput optimization
were not conducted in the conference version.
Moreover, the current journal paper presents much more extensive numerical results
which provide useful insights into the proposed design.

The paper is structured as follows.
The system model and problem formulation are presented in Section II.
Section III describes the proposed channel estimation, interference cancellation, and 
the symbol detection techniques.
In Section IV, we analyze the residual interference, SER, and optimal frame design for the fast fading and interference scenario.
Numerical results are presented in Section V and Section VI concludes the paper.

Some important notations used in the paper are summarized as follows:
$\mathbf{I}_N$ represents the $N \times N$ identity matrix,
$\mathbf{1}_{M,N}$ is the $M \times N$ all-one matrix,
$\mathbf{A}^H$ is the \textit{Hermitian transpose} of matrix $\mathbf{A}$,
$x^*$ is the \textit{conjugate} of complex value $x$,
$\mathds{1}_{i=j}$ is the indicator function equal to one when $i=j$ and equal to zero otherwise,
$const.$ represents a constant independent of the variables of interest,
($\star$) denotes the convolution operation and $(\propto)$ denotes `\textit{proportional to}'.

\begin{figure*}
	\centering
	\includegraphics[width=\textwidth]{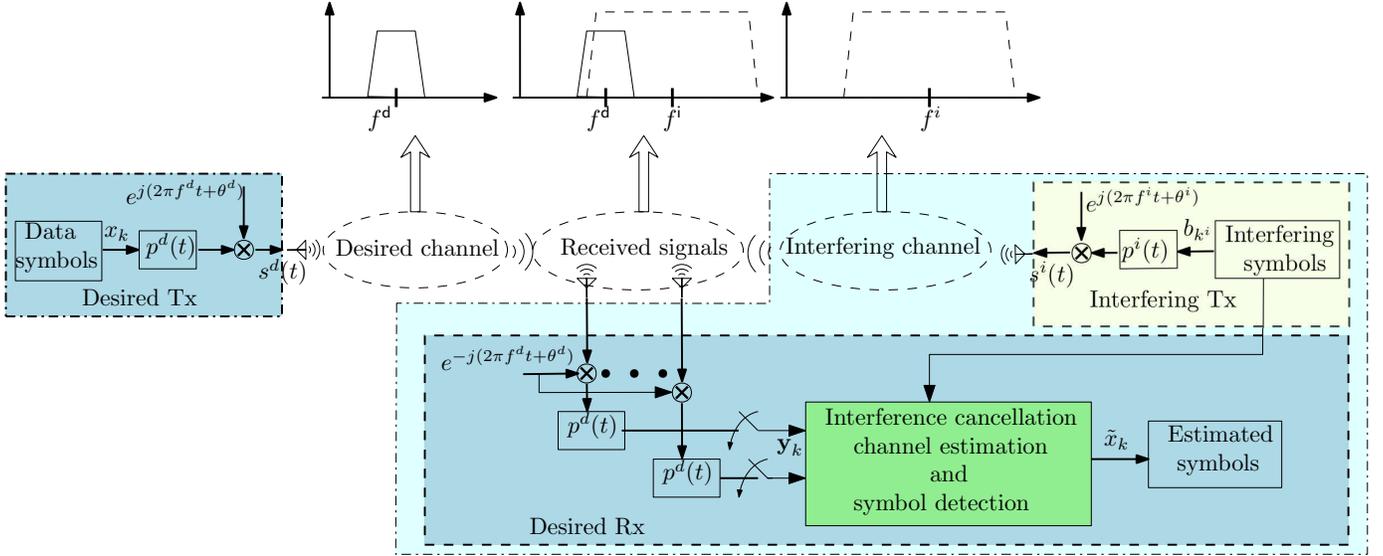}
	\caption{Considered interference scenario}
	\label{fig: system structure}
\end{figure*}

\section{System Model and Problem Statement}

We consider the scenario where two communication links denoted by $\pmb{\mathbb{S}}^{\sf d}$ (desired link) and $\pmb{\mathbb{S}}^{\sf i}$ (interfering link) operate on overlapping frequency bands. The transmitted signal from $\pmb{\mathbb{S}}^{\sf i}$ interferes with the received signal of $\pmb{\mathbb{S}}^{\sf d}$.
One popular assumption usually made in the literature is that interfering and desired signals have identical bandwidths 
where the full-duplex system is a special setting attracting great interests recently.
Our current paper considers the more general scenario in which the frequency bands of the
two communication links can be arbitrarily aligned and their bandwidth ratio is an integer.
The considered setting corresponds to the practical interference scenarios in 
satellite communications \cite{FCC_sat, euro_report}
	and terrestrial communications, e.g., full-duplex relay \cite{ngo2014multipair, huang2017power}.

We further assume that the desired communication channel experiences the fast fading where the channel coefficient changes
from symbol to symbol according to the first order Markov process \cite{wang1996verifying,tan2000first}.
In addition, the interfering channel from the interfering source to the antennas of the desired receiver
is assumed to be line of sight.
In this interference scenario, the involved signals have different bandwidths and
are not synchronized with each other.
This induces a dynamic interference pattern to the desired received signal,
which can be captured by the EICs \cite{Nguyen2017, schwarzenbarth2007}.
We propose to jointly estimate the desired channel coefficients and the EICs with the knowledge of transmitted
symbols from the interfering source and the pilot symbols of the desired signal.

	The considered setting with desired and interfering communications is illustrated in 
	Fig. \ref{fig: system structure}.
	The studied interference scenario occurs in practice when the interfering Tx and the desired Rx are located close to each other
	and the desired Rx has access to the interfering symbols (e.g., via a dedicated connection) as in the full-duplex 
	relay \cite{ngo2014multipair, huang2017power}.
	More details about the system are introduced in the followings.

\subsection{Signal Models}
The transmitted signal of the desired communication with the carrier frequency
$f^{\sf d}$ can be written as
\begin{equation}  \label{eq: desired Tx continuous}
s^{\sf d}(t)=\sum_{k=- \infty}^{\infty} {x_k p^{\sf d}(t{-}kT^{\sf d}{+}\epsilon^{\sf d})e^{j(2\pi f^{\sf d} t +\theta^{\sf d})}}, \\
\end{equation}
where $x_k$ is the $k$th transmitted symbol. 
The pulse shaping function $p^{\sf d}(t)$ has unity gain;
$T^{\sf d}$, $\epsilon^{\sf d}$ and $\theta^{\sf d}$ 
represent the symbol duration, time and phase offsets, respectively.
Similarly, the signal from the interfering source can be written as
\begin{equation} \label{eq: interfering Tx continuous}
s^{\sf i}(t)=\sum_{k^{\sf i}=-\infty}^{\infty} {b_{k^{\sf i}}p^{\sf i}(t{-}k^{\sf i}T^{\sf i}{-}t^{\sf i})e^{j(2\pi f^{\sf i} t+\theta^{\sf i})}}, \\
\end{equation}
where $p^{\sf i}(t)$ denotes the pulse shaping filter with unity gain,
the interfering signal has the center frequency $f^{\sf i}=f^{\sf d}-\Delta f$, 
the $k^{\sf i}$th symbol is $b_{k^{\sf i}}$;
$t^{\sf i}$ and $\theta^{\sf i}$ account for the time/phase difference of the two systems and transmission time delay
from the interfering transmitter to the interfered receiver, respectively.
Assume that there are $N_{\sf r}$ receiver antennas for $\pmb{\mathbb{S}}^{\sf d}$, then the received signal is  
\begin{equation} \label{eq: Rx continuous}
\mathbf{y}(t)=\mathbf{h}^{\sf d}(t)\star s^{\sf d}(t)+\mathbf{h}^{\sf i}(t)\star s^{\sf i}(t){+}\mathbf{w}(t), \\
\end{equation}
where $\mathbf{h}^{\sf d}(t)$ and $\mathbf{h}^{\sf i}(t)$ denote $N_{\sf r} {\times} 1$ vectors of desired and interfering channel impulse responses.

At the receiver of $\pmb{\mathbb{S}}^{\sf d}$, the signals are
down-converted to baseband by using $e^{-j(2\pi f^{\sf d} t+\theta^{\sf d})}$. 
The output signals then pass through a matched filter  having the impulse response $p^{\sf d}(t)$; 
and the filtered continuous signals are sampled at $(kT^{\sf d}+\epsilon^{\sf d})$ to yield the following discrete time signal   
\begin{equation} \label{eq: Rx discrete}
\mathbf{y}_k=\mathbf{h}^{\sf d}_k x_k +\mathcal{I}_k+\mathbf{w}_k, \\
\end{equation}
where $\mathbf{w}_k$ represents the vector of noise having complex Gaussian distribution with covariance matrix $\sigma^2\mathbf{I}_{N_{\sf r}}$
($\mathbf{w}_k$ is called AWGN hereafter);
$\mathcal{I}_k$ denotes the equivalent baseband, discrete time interfering signal
which will be derived shortly.
Firstly, we express  the interference terms in the continuous time domain as follows:   
\begin{equation} \label{eq: I_mt1} 
\mathcal{I}(t)=\left\lbrace \left( \mathbf{h}^{\sf i}(t){\star} s^{\sf i}(t)\right)  e^{-j(2\pi f^{\sf d} t+\theta^{\sf d})} \right\rbrace {\star} p^{\sf d}(t).
\end{equation}

Substituting  $s^{\sf i}(t)$ from \eqref{eq: interfering Tx continuous} into \eqref{eq: I_mt1},
we obtain the equivalent baseband interference signal whose sampled signal
at time  $(kT^{\sf d}+\epsilon^{\sf d})$ is
\begin{equation} \label{eq: Imk}
\mathcal{I}_k =\mathcal{I}(t) |_{t=kT^{\sf d}+\epsilon^{\sf d}} =\mathbf{h}^{\sf i}_k \sum\nolimits_{k^{\sf i}}^{}{b_{k^{\sf i}} c_{k,k^{\sf i}}},
\end{equation}
where $c_{k,k^{\sf i}}$ represents the EIC which is defined in \eqref{eq: EIC general}.
\newcounter{tempequationcounter_1}
\begin{figure*}[t]
	\normalsize
	\setcounter{tempequationcounter_1}{\value{equation}}
	\begin{IEEEeqnarray}{rCl}
		&c_{k,k^{\sf i}}=\int_{-\infty}^{\infty}{p^{\sf d}(kT^{\sf d}+\epsilon^{\sf d}-\tau)p^{\sf i}(\tau-k^{\sf i}T^{\sf i}-t^{\sf i})e^{j\big(2\pi (f^{\sf i}-f^{\sf d})\tau+\theta^{\sf i}+\theta^{\sf d}\big)}d\tau}.
		\label{eq: EIC general}
	\end{IEEEeqnarray}
	\setcounter{equation}{\value{tempequationcounter_1}}
	\hrulefill
\end{figure*}
\addtocounter{equation}{1}

	Suppose that the interfering signal's bandwidth is $\textit{M}$ times larger than
	that of the interfered signal's bandwidth and
	there are $\textit{L}$ symbols of $b_{k^{\sf i}}$'s interfering to each desired symbol $x_k$
	where $L$ should be a multiple of the bandwidth ratio $\textit{M}$ to account for
	the interference in the filter span of the desired signal\footnote{For tractability, 
	the bandwidth ratio $\textit{M}$ is an integer. As a result, the achieved results provide
	performance bounds and approximation for the case where $\textit{M}$ is a real number.}.
	For the considered interference scenario, the bandwidth of the interfering signal is multiple times larger than that of the desired signal.
	Since the bandwidth ratio is an integer, $c_{k,k^{\sf i}}$ in \eqref{eq: EIC general}
	 depends only on the relative difference of $k$, $k^{\sf i}$.
	Hence, for brevity, we denote them as  $ \mathbf{c}=[c_1, c_2,..., c_L]^T$ in the sequel.

\subsection{Channel Models}

The fast fading channel of the desired communication link $\mathbf{h}^{\sf d}_k$ in  \eqref{eq: Rx discrete} is assumed to
follow the first-order Markov model
where the relation of channel coefficients at instants $(k+1)$th and $k$th
can be described as \cite{wang1996verifying}:
\begin{equation} \label{eq: channel_model}
\begin{split}
\mathbf{h}^{\sf d}_{k+1}&= \alpha \mathbf{h}^{\sf d}_k+\sqrt{1-\alpha^2}\boldsymbol{\Delta}_k,\\
\end{split}
\end{equation}
where $\boldsymbol{\Delta}_k$ denotes
a vector of Circular Symmetric Complex Gaussian (CSCG) noise
with zero means and covariance matrix $\sigma^2_{\sf h} \mathbf{I}_{N_{\sf r}}$.
The additive noise term in \eqref{eq: channel_model} is
called channel evolutionary noise
and $\alpha$ is the channel correlation coefficient.
The average Signal to Noise Ratio (SNR) is $\rho=\sigma_{\sf h}^2/\sigma^2$
(called SNR without fading in some previous works \cite{sun2014maximizing}).
Without loss of generality, we let $\sigma^2_{\sf h}=1$.
However, $\sigma^2_{\sf h}$ may appear occasionally in several expressions whenever needed.

	The Markovian channel model can accurately capture the practical Clarke channel model,
	which has been validated in \cite{wang1996verifying, tan2000first}. Moreover, the Markovian channel model
	has been widely adopted in the literature \cite{sadeghi2008finite, zhu2009message, zhang2013blind, 
	sun2014maximizing, tan2000first, mustafa2015separation, hu2018secure}.
	In fact, the authors of \cite{zhu2009message} have conducted the model mismatching study,
	where the actual channel follows the Clarke model and the assumed channel is the Gaussian-Markov model,
	and they have found that the mismatch is negligible.

We assume that the receiver has perfect 
information about the interfering channel gains $\mathbf{h}_k^{\sf i}$ which
correspond to the line of sight link as assumed.
Therefore, the interfering channel gains vary slowly over time and they can be estimated accurately.

\subsection{Problem Statement} 

Using the result of $\mathcal{I}_k$ in \eqref{eq: Imk}, we can rewrite the received signal in (\ref{eq: Rx discrete}) as
\begin{equation} \label{eq: Rx final}
\begin{split}
\mathbf{y}_k
& =\mathbf{h}^{\sf d}_kx_k + \sum_{l=1}^{L}{\left( \mathbf{h}^{\sf i}_k b_{Mk+l}\right)  c_l }+\mathbf{w}_k\\
& =\mathbf{h}^{\sf d}_kx_k + \sum_{l=1}^{L}{ \mathbf{b}_{k,l}  c_l }+\mathbf{w}_k,
\end{split}
\end{equation}
where $\mathbf{b}_{k,l}=\mathbf{h}^{\sf i}_k b_{Mk+l}$.
Then, we can rewrite \eqref{eq: Rx final} in a matrix form as follows:
\begin{equation}
\mathbf{y}_k=\mathbf{h}^{\sf d}_kx_k + \mathbf{B}_k\mathbf{c}+\mathbf{w}_k,
\end{equation} 
where
	$\mathbf{B}_k$ is the $N_{\sf r} \times L$ matrix whose $l$th column is $\mathbf{b}_{k,l}$.
	We will call $\mathbf{B}_k$ the interference matrix hereafter.
	Recall that the interfering symbols $\mathbf{b}_{Mk+l}$ and the interfering channel gains $\mathbf{h}^{\sf i}_k$ are assumed to be known. 
	Therefore, $\mathbf{B}_k$ is known by the desired receiver.

In this paper,
$\mathbf{y}_k$ is referred to as the \textit{received signal} or \textit{observation} interchangeably.
Since the interfering channels are known and captured
in the interference matrix $\mathbf{B}_k$, we will omit the superscript $\sf d$ in the desired channel notation, i.e., $\mathbf{h}_k^{\sf d}$ becomes $\mathbf{h}_k$. From now on, \textit{channels} means desired channels discussed in the previous sections.

This paper aims to address the following questions:
\begin{itemize}
	\item [1)] Given the interference matrix $\mathbf{B}_k$,
	the observations $\mathbf{y}_k$ and the pilot symbols, 
	how can one cancel the interference and detect data symbols reliably?
	\item[2)] What are the effects of fast fading channel evolutionary noise to the overall system performances (EIC estimation,
	interference cancellation, channel estimation, and symbol detection)?
	\item[3)] Is there an optimal frame design (i.e., optimal pilot density) that maximizes the throughput in the presence of fast fading and interference?
\end{itemize}

In the next sections, we will provide the answers for these questions.

\section{Proposed Channel Estimations and Interference Cancellation Strategy}

Even though the MMSE method has been widely used in channel estimation, this method
relies heavily on the knowledge of the time-domain  channel correlation \cite{ma2003optimal, feng2010estimation, zemen2012adaptive, xu2019adaptive}.
In the presence of interference, MMSE can only be applied after the interference is canceled out.
Moreover, its achieved performance depends on the interference cancellation techniques and the resulted residual interference.
In addition, MMSE estimators typically require matrix inversion with complexity scaling with the number of pilot symbols,
which may become unaffordable for long frames.
These drawbacks of the MMSE method
motivate us to use the MAP estimator instead where
the MAP estimator can be used to estimate the channel coefficients.
Furthermore, the MAP estimator is usually preferable to other estimation techniques regarding both bias and variance
for the setting with a small number of observations,
which corresponds to the small number of pilot symbols in our considered frame \cite{yeredor2000joint}.

In this section, we propose a two-phase design framework for estimation of the EICs and symbol detection.
In the first phase, the EICs are estimated at each pilot position using the maximum likelihood (ML) approach.
Then, we take the average of the estimates of $\mathbf{c}$ over all pilot positions
to obtain a reduced-variance estimate of $\mathbf{c}$ compared to its estimates at different pilot positions.
After that the interference is subtracted from the received signal and the channel coefficients are estimated at pilot positions.
In the second phase, the  \textit{a posteriori probability} of data symbols is derived, given the estimated channel coefficients at 
the pilot positions before and after the data intervals, then the data symbols are detected based on that probability.
Fig. \ref{fig:Block} illustrates our proposed design for one particular frame. 
\begin{figure}[ht]
	\centering
	\includegraphics[scale=0.72]{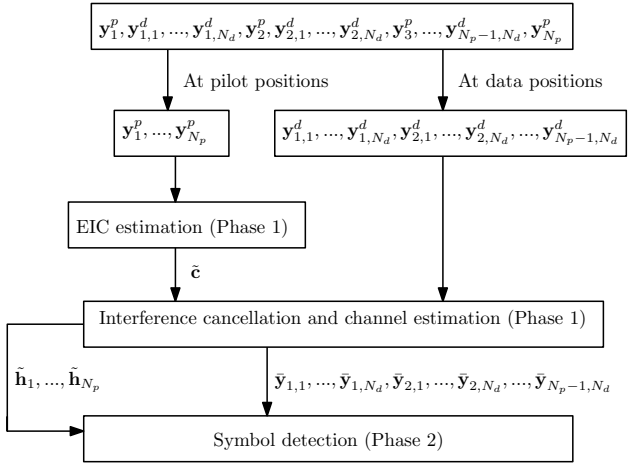}
	\caption{Illustration of the proposed design}
	\label{fig:Block}
\end{figure}

Channel estimation and symbol detection are performed in each frame.
We consider the scattered pilot frame structure in the time domain with $N_{\sf d}$ data symbols between two consecutive pilot symbols,
and there are $N_{\sf p}$ pilot symbols in a frame \cite{tong2004pilot, asyhari2017orthogonal}.
Typical symbol arrangement in a frame is expressed as
$[x^{\sf p}_1,x^{\sf d}_{1,1}{,}...{,}x^{\sf d}_{1,N_{\sf d}}{,}x^{\sf p}_{2},x^{\sf d}_{1,2},...,x^{\sf d}_{2,N_{\sf d}}{,}...{,}x^{\sf d}_{N_{\sf p}{-}1,N_{\sf d}},x^{\sf p}_{N_{\sf p}}]$,
where $x^{\sf p}_{i}$ denotes the $i$th pilot symbol, and $[x^{\sf d}_{1,i},...,x^{\sf d}_{i,N_{\sf d}}]$ denotes data symbols between the $i$th and $(i{+}1)$th pilot symbols. Fig. \ref{fig: pilot} illustrates this pilot arrangement.
\begin{figure}[ht]
	\centering
	\includegraphics[scale=0.83]{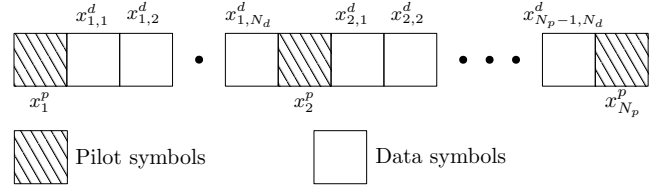}
	\caption{Pilot and data symbol arrangement in a frame}
	\label{fig: pilot}
\end{figure}

\subsection{Phase 1: Estimation of Interference and Channel Coefficients}

In the first phase, we are interested in estimating $\mathbf{c}$
and $\mathbf{h}_n^{\sf p}, n=1,...,N_{\sf p}$ given the observations $\mathbf{y}^{\sf p}_{1:N_{\sf p}}$. 
For brevity, the superscript $\sf p$ is omitted in this section,
i.e., $x_i^{\sf p}$ becomes $x_i$.
We denote $\mathbf{Y}=[\mathbf{y}_{1:n-1}, \mathbf{y}_n, \mathbf{y}_{n+1:N_{\sf p}}]$.
We have the knowledge of the distribution of $\mathbf{h}_n$, so we use the MAP criteria to estimate $\mathbf{h}_n$.
Note that either $p(\mathbf{h}_n|\mathbf{Y})$ or $p(\mathbf{h}_n,\mathbf{Y})$ can be used,
since $p(\mathbf{h}_n,\mathbf{Y})=p(\mathbf{h}_n|\mathbf{Y})p(\mathbf{Y})$
and $p(\mathbf{Y})$ is independent of the parameter of interest $\mathbf{h}_n$.
Recall also that the EICs $\mathbf{c}$ are unknown, deterministic parameters within a frame.
Therefore, the joint estimation 
criteria for $\mathbf{c}$ and $\mathbf{h}_n$ can be expressed as
\begin{equation} \label{eq: h estimation criteria}
\begin{split}
\left\lbrace \tilde{\mathbf{c}}_n, \tilde{\mathbf{h}}_n\right\rbrace
&= \text{argmax} \:  p(\mathbf{h}_n,\mathbf{Y}|\mathbf{c}).
\end{split}
\end{equation}

For notational convenience, we omit $\mathbf{c}$ in the following distributions,
when there is no confusion,
i.e., $p(\mathbf{h}_n,\mathbf{Y}|\mathbf{c})$ is simply written as $p(\mathbf{h}_n,\mathbf{Y})$.
In order to estimate $\mathbf{h}_n$ and $\mathbf{c}$ according to \eqref{eq: h estimation criteria},
we need to find $p(\mathbf{h}_n,\mathbf{Y})$. 
Therefore, we provide the following theorem which states the log likelihood of the received signals and the channel coefficients at pilot positions.
\begin{theorem} \label{Theorem: prob of h,Y}
	The log likelihood of the received signals and channel coefficients at pilot position $n$ is
	\begin{equation} \label{eq: LLH h_n Y}
	\begin{split}
	\mathcal{L}_{\mathbf{h}_n,\mathbf{Y}} &=\log(p(\mathbf{h}_n,\mathbf{Y}) )\\
	{=}{-}&
	\sum_{i=1}^{N_{\sf p}}{\left( \mathbf{y}_{i}{-}\boldsymbol{\mu}_{i,n}\right) ^H \boldsymbol{\Sigma}_{i,n}^{-1}\left( \mathbf{y}_{i}{-}\boldsymbol{\mu}_{i,n}\right)}
	{-}\mathbf{h}_n^H\mathbf{h}_n {+} const. .\\
	\end{split}
	\end{equation}
\end{theorem}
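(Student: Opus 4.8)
The plan is to expand the joint density by the chain rule, exploit the first-order Markov structure of the fading in \eqref{eq: channel_model} to reduce $p(\mathbf{h}_n,\mathbf{Y})$ to a product of per-pilot Gaussian densities $p(\mathbf{y}_i\mid\mathbf{h}_n)$, and then take logarithms while absorbing every term independent of $\mathbf{h}_n$ and $\mathbf{c}$ into $const.$. For the factorization I would write $p(\mathbf{h}_n,\mathbf{Y})=p(\mathbf{h}_n)\,p(\mathbf{y}_n\mid\mathbf{h}_n)\prod_{i\neq n}p(\mathbf{y}_i\mid\mathbf{h}_n)$: conditioned on $\mathbf{h}_n$, each pilot observation $\mathbf{y}_i=\mathbf{h}_i x_i+\mathbf{B}_i\mathbf{c}+\mathbf{w}_i$ from \eqref{eq: Rx final} depends on the fading only through $\mathbf{h}_i$, and since $\{\mathbf{h}_k\}$ is a Markov chain the per-pilot observations can be treated as conditionally independent given $\mathbf{h}_n$. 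I expect this to be the main obstacle, because strictly two pilots on the same side of $n$ stay correlated given $\mathbf{h}_n$ (the channel evolutionary noise accumulates along the chain), so the clean product form is exact only for special configurations (e.g.\ adjacent pilots) and is otherwise the tractable working model; I would state this conditional-independence assumption explicitly.

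Next I would compute $p(\mathbf{y}_i\mid\mathbf{h}_n)$ in closed form. Iterating \eqref{eq: channel_model} over the symbol intervals separating pilots $i$ and $n$ gives $\mathbf{h}_i\mid\mathbf{h}_n\sim\mathcal{CN}\!\big(\alpha_{i,n}\mathbf{h}_n,\,(1-\alpha_{i,n}^2)\sigma_{\sf h}^2\mathbf{I}_{N_{\sf r}}\big)$, where $\alpha_{i,n}$ is $\alpha$ raised to the number of symbol intervals between the two pilots (so $\alpha_{n,n}=1$ and $\mathbf{h}_n\sim\mathcal{CN}(0,\sigma_{\sf h}^2\mathbf{I}_{N_{\sf r}})$ marginally). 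Because $\mathbf{y}_i$ is an affine image of the Gaussian $\mathbf{h}_i$ corrupted by the independent AWGN $\mathbf{w}_i\sim\mathcal{CN}(0,\sigma^2\mathbf{I}_{N_{\sf r}})$, we obtain $\mathbf{y}_i\mid\mathbf{h}_n\sim\mathcal{CN}(\boldsymbol{\mu}_{i,n},\boldsymbol{\Sigma}_{i,n})$ with $\boldsymbol{\mu}_{i,n}=\alpha_{i,n}x_i\mathbf{h}_n+\mathbf{B}_i\mathbf{c}$ and $\boldsymbol{\Sigma}_{i,n}=\big(|x_i|^2(1-\alpha_{i,n}^2)\sigma_{\sf h}^2+\sigma^2\big)\mathbf{I}_{N_{\sf r}}$; setting $i=n$ recovers $\boldsymbol{\mu}_{n,n}=x_n\mathbf{h}_n+\mathbf{B}_n\mathbf{c}$ and $\boldsymbol{\Sigma}_{n,n}=\sigma^2\mathbf{I}_{N_{\sf r}}$, consistent with \eqref{eq: Rx final}.

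Finally I would take the logarithm of the factorization. For each $i$, $\log p(\mathbf{y}_i\mid\mathbf{h}_n)=-(\mathbf{y}_i-\boldsymbol{\mu}_{i,n})^H\boldsymbol{\Sigma}_{i,n}^{-1}(\mathbf{y}_i-\boldsymbol{\mu}_{i,n})-\log\det(\pi\boldsymbol{\Sigma}_{i,n})$, and since $\boldsymbol{\Sigma}_{i,n}$ depends on neither $\mathbf{h}_n$ nor $\mathbf{c}$ the log-determinant is lumped into $const.$; similarly $\log p(\mathbf{h}_n)=-\mathbf{h}_n^H\mathbf{h}_n+const.$ because $\sigma_{\sf h}^2=1$. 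Summing over $i=1,\dots,N_{\sf p}$ then yields \eqref{eq: LLH h_n Y}. Everything that remains — pinning down the exponent of $\alpha_{i,n}$ for a general pilot spacing $N_{\sf d}$ and the Gaussian algebra — is routine bookkeeping.
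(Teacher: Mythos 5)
Your overall skeleton (factor the joint density, reduce each factor to a complex Gaussian in $\mathbf{h}_n$ and $\mathbf{c}$, take logs, absorb everything else into $const.$) matches the paper's, but the factorization you choose is not the one the theorem is built on, and the discrepancy is not cosmetic. You posit conditional independence of the pilot observations given $\mathbf{h}_n$ and obtain $\boldsymbol{\mu}_{i,n}=\alpha_{i,n}x_i\mathbf{h}_n+\mathbf{B}_i\mathbf{c}$ with a purely marginal conditional covariance. The paper instead writes $p(\mathbf{Y}\mid\mathbf{h}_n)=p(\mathbf{y}_n\mid\mathbf{h}_n)\prod_{i<n}p(\mathbf{y}_i\mid\mathbf{y}_{i+1},\mathbf{h}_n)\prod_{i>n}p(\mathbf{y}_i\mid\mathbf{y}_{i-1},\mathbf{h}_n)$ as in \eqref{eq: AL1_LLH}, i.e.\ each factor is conditioned on the adjacent observation precisely because, as you yourself note, observations on the same side of $n$ remain correlated given $\mathbf{h}_n$. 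Each such factor is then extracted from the bivariate Gaussian of $\left(\mathbf{y}_i,\mathbf{y}_{i\mp1}\right)$ given $\mathbf{h}_n$, and this conditioning step is what produces the innovation coefficients $\beta_{i,n}$, the shifted means $\boldsymbol{\mu}_{i,n}=\boldsymbol{\mu}_{\mathbf{y}_i|\mathbf{h}_n}+\beta_{i,n}\left(\mathbf{y}_{i\mp1}-\boldsymbol{\mu}_{\mathbf{y}_{i\mp1}|\mathbf{h}_n}\right)$, and the reduced conditional variances $\sigma_{i,n}^2$ in \eqref{eq: cond_para_after} and \eqref{eq: cond_para_before}. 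Since the content of Theorem~1 is exactly the identification of these $\boldsymbol{\mu}_{i,n},\boldsymbol{\Sigma}_{i,n}$ --- and everything downstream ($\mathbf{y}_{i,n}=\mathbf{y}_i-\beta_{i,n}\mathbf{y}_{i+j_{i,n}}$, $\omega_{i,n}$, $\mathbf{A}_n$, $\tilde{\mathbf{h}}_n$, $\tilde{\mathbf{c}}_n$) is built from them --- your derivation establishes a different, strictly cruder approximation of $\log p(\mathbf{h}_n,\mathbf{Y})$ rather than the stated identity.

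Flagging the conditional-independence assumption explicitly, as you propose, does not close this gap; it changes the statement being proved. The missing idea is the pairwise adjacent-observation Gaussian conditioning. As partial consolation, the paper's own chain factorization is itself exact only in a first-order sense --- the observations of a hidden Markov channel are not Markov given $\mathbf{h}_n$, because $\mathbf{y}_{i-1}$ is a noisy and hence insufficient statistic for $\mathbf{h}_{i-1}$ --- but it retains the dominant adjacent correlations that your product form discards entirely, and it is the specific approximation the stated parameters encode.
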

\begin{proof}
	The derivation and related parameters ($\boldsymbol{\mu}_{i,n}, \boldsymbol{\Sigma}_{i,n}$) can be found in Appendix A.
\end{proof}

We estimate the desired channel and EIC by maximizing $\mathcal{L}_{\mathbf{h}_n,\mathbf{Y}}$.
As can be shown in the derivation later, the exponent of $p(\mathbf{h}_n,\mathbf{Y}|\mathbf{c})$ can be decomposed into two quadratic terms
where one term contains $\mathbf{h}_n$
and the other contains only $\mathbf{c}$ and not $\mathbf{h}_n$.
Since there are two variables to be optimized (i.e., $\mathbf{h}_n$ and $\mathbf{c}$), we first derive 
the optimal $\mathbf{h}_n$ with respect to  $\mathbf{c}$ then we derive
the optimal $\mathbf{c}$ by maximizing the corresponding objective function achieved with the optimal $\mathbf{h}_n$. 

$\bigstar$ \textit{Step 1}: Derivation of the optimal $\mathbf{h}_n$ for a given $\mathbf{c}$

The sum of quadratic terms in (\ref{eq: LLH h_n Y}) can be re-written as
\begin{equation} \label{eq: LLH hn Y -const}
\begin{split}
\tilde{\mathcal{L}}_{\mathbf{h}_n,\mathbf{Y}}&{=}-\mathbf{h}_n^H\mathbf{h}_n \\
{-}&\sum_{i=1}^{N_{\sf p}}{\!
	\left( \mathbf{y}_{i,n}{-}x_{i,n}\mathbf{h}_n{-}\mathbf{B}_{i,n}\mathbf{c}\right)^{\!H}
	\! \boldsymbol{\Sigma}_{i,n}^{-1}\!
	\left( \mathbf{y}_{i,n}{-}x_{i,n}\mathbf{h}_n{-}\mathbf{B}_{i,n}\mathbf{c}\right)}\\
&{=}{-}
(\mathbf{h}_n-\tilde{\mathbf{h}}_n)^H
\mathbf{A}_n
(\mathbf{h}_n-\tilde{\mathbf{h}}_n)
-\mathcal{C}_n,
\end{split}
\end{equation} 
where we omit the constant in (\ref{eq: LLH h_n Y}).
$\mathbf{A}_n, \tilde{\mathbf{h}}_n$ and $\mathcal{C}_n$ are defined as
\begin{equation}\label{eq: step1_para}
\begin{split}
\mathbf{A}_n&{=}\mathbf{I}_{N_{\sf r}}+\sum_{i=1}^{N_{\sf p}}{\omega_{i,n}^2\boldsymbol{\Sigma}_{i,n}^{-1}},\\
\tilde{\mathbf{h}}_n&{=}\mathbf{A}_n^{-1}\left( \sum_{i=1}^{N_{\sf p}}{x_{i,n}^*\boldsymbol{\Sigma}_{i,n}^{-1}\left( \mathbf{y}_{i,n}-\mathbf{B}_{i,n}\mathbf{c}\right)} \right), \\
\mathcal{C}_n&{=}{-}\tilde{\mathbf{h}}_n^H\mathbf{A}_n\tilde{\mathbf{h}}_n{+} \!\!
\sum_{i=1}^{N_{\sf p}}{
	\!\left(
	\mathbf{y}_{i,n}\!{-}\mathbf{B}_{i,n}\mathbf{c}
	\right)^H \! 
	\boldsymbol{\Sigma}^{-1}_{i,n}
	\left(
	\mathbf{y}_{i,n}{-}\mathbf{B}_{i,n}\mathbf{c}
	\right)},\\
\end{split}
\end{equation}
where $\omega_{i,n}, x_{i,n}, \mathbf{y}_{i,n}, \mathbf{B}_{i,n}$ and the related parameters are defined in (\ref{eq: xyB})-(\ref{eq: para1}).
For notational simplicity, we denote the `sign indicator' $j_{i,n}=-1$ for $i>n$, $j_{i,n}=1$ for $i<n$ and $j_{i,n}=0$ for $i=n$.
Since $\mathbf{A}_n$ is positive definite, the optimal $\mathbf{h}_n$ that maximizes $\tilde{\mathcal{L}}_{\mathbf{h}_n,\mathbf{Y}}$ in (\ref{eq: LLH hn Y -const}) is $\tilde{\mathbf{h}}_n$.
Note that, when the desired channels are independent, we have $\mathbf{A}_n=a_n\mathbf{I}_{N_{\sf r}}$, where
\begin{equation}
a_n=1+\sum_{i=1}^{N_{\sf p}}{\frac{\omega_{i,n}^2}{\sigma_{i,n}^2}}.
\end{equation}

\newcounter{tempequationcounter_2}
\begin{figure*}[ht]
	\normalsize
	\setcounter{tempequationcounter_1}{\value{equation}}
	\begin{IEEEeqnarray}{lCl}
		x_{i,n}=\omega_{i,n}x_i, \quad
		\mathbf{y}_{i,n}=
		\mathbf{y}_i{-}\beta_{i,n}\mathbf{y}_{i+j_{i,n}}, \quad
		\mathbf{B}_{i,n}=\mathbf{B}_i{-}\beta_{i,n}\mathbf{B}_{i+j_{i,n}}. \label{eq: xyB} \\
		\omega_{i,n}=\left\{
		\begin{array}{ll}
			\frac{\alpha_{\sf p}^{|n-i|}}{1+\rho(1-\alpha_{\sf p}^{2(|n-i|-1)})}, &i \neq n\\
			1, &i=n
		\end{array}
		\right.
		, \quad \quad \quad \quad \quad
		\beta_{i,n}=
		\left\{
		\begin{array}{ll}
			\frac{x_ix_{i+j_{i,n}}^*\rho\alpha_{\sf p}\left(1- \alpha_{\sf p}^{2(|n-i|-1)}\right) }
			{1+\rho \left(1- \alpha_{\sf p}^{2(|n-i|-1)}\right) }, &i \neq n\\
			0, &i=n
		\end{array}
		\right.
		. \label{eq: para1}\\
		\mathbf{D}_n=\sum \nolimits _{i=1}^{N_{\sf p}}{\mathbf{B}_{i,n}^H\boldsymbol{\Sigma}_{i,n}^{-1}\mathbf{B}_{i,n}}
		{-}\left( \sum \nolimits _{i=1}^{N_{\sf p}}{x_{i,n}^*\boldsymbol{\Sigma}_{i,n}^{-1}\mathbf{B}_{i,n}}\right)^H\mathbf{A}_n^{-1}
		\left( \sum \nolimits _{i=1}^{N_{\sf p}}{x_{i,n}^*\boldsymbol{\Sigma}_{i,n}^{-1}\mathbf{B}_{i,n}}\right). \label{eq: D_tilde}\\ 
		\tilde{\mathbf{c}}_n=\mathbf{D}^{-1}_n \left\lbrace 
		\sum \nolimits _{i=1}^{N_{\sf p}}{\mathbf{B}_{i,n}^H\boldsymbol{\Sigma}_{i,n}^{-1}\mathbf{y}_{i,n}}
		- \left( \sum \nolimits _{i=1}^{N_{\sf p}}{x_{i,n}^*\boldsymbol{\Sigma}_{i,n}^{-1}\mathbf{B}_{i,n}}\right)^H\mathbf{A}_n^{-1}
		\left( \sum \nolimits _{i=1}^{N_{\sf p}}{x_{i,n}^*\boldsymbol{\Sigma}^{-1}_{i,n}\mathbf{y}_{i,n}}\right)
		\right\rbrace. \label{eq: c_tilde}\\
		\mathcal{F}=\sum \nolimits _{i=1}^{N_{\sf d}}{\left[ \left(\tau_2 \boldsymbol{\Gamma}_{i,1}\mathbf{h}_h{+}\mathds{1}_{i=N_{\sf d}}\tau_2\mathbf{h}_t {+}\sum \nolimits _{j=1}^{i}{\frac{x_j^*}{\sigma^2}\boldsymbol{\Gamma}_{i,j}\mathbf{y}_j}\right)^H \mathbf{S}_i\right. } 
		\left. 	\left( \tau_2 \boldsymbol{\Gamma}_{i,1}\mathbf{h}_h{+}\mathds{1}_{i=N_{\sf d}}\tau_2\mathbf{h}_t{+}\sum \nolimits _{j=1}^{i}{\frac{x_j^*}{\sigma^2}\boldsymbol{\Gamma}_{i,j}\mathbf{y}_j}\right) \right]. \label{eq: F}\\
		\mathbf{S}_i^{-1}=
		\left[ \frac{1}{\sigma^2}+(1+\alpha^2)\tau_1 \right]\mathbf{I}_{N_{\sf r}}
		-\mathds{1}_{i>1}\tau_2^2\mathbf{S}_{i-1}, \quad
		\bar{\mathbf{h}}_i=\left\{
		\begin{array}{ll}
			\mathbf{S}_i\left[ \tau_2\boldsymbol{\Gamma}_{i,1}\mathbf{h}_h{+} \tau_2\mathbf{h}_{i+1}{+}\sum_{j=1}^{i}{ \frac{x_j^*}{\sigma^2}\boldsymbol{\Gamma}_{i,j}\mathbf{y}_j} \right] , &i{<}N_{\sf d}\\[5pt]
			\mathbf{S}_i\left[ \tau_2\boldsymbol{\Gamma}_{i,1}\mathbf{h}_h{+} \tau_2\mathbf{h}_{t}{+}\sum_{j=1}^{i}{ \frac{x_j^*}{\sigma^2}\boldsymbol{\Gamma}_{i,j}\mathbf{y}_j} \right] , &i=N_{\sf d}
		\end{array}. \label{eq: para3}
		\right.
	\end{IEEEeqnarray} 
	\setcounter{equation}{\value{tempequationcounter_1}}
	\hrulefill
\end{figure*}
\addtocounter{equation}{6}

$\bigstar$ \textit{Step 2}: Derivation of the optimal $\mathbf{c}$

When $\mathbf{h}_n=\tilde{\mathbf{h}}_n$, the function in (\ref{eq: LLH hn Y -const}) is equal to $-\mathcal{C}_n$ which only depends on $\mathbf{c}$ where
\begin{equation}
\begin{split}
\mathcal{C}_n=&\sum \nolimits_{i=1}^{N_{\sf p}}{\left( \mathbf{y}_{i,n}{-}\mathbf{B}_{i,n}\mathbf{c}\right)^H\boldsymbol{\Sigma}^{-1}_{i,n}\left( \mathbf{y}_{i,n}{-}\mathbf{B}_{i,n}\mathbf{c}\right)}\\
{-}&\left\lbrace \left( \sum\nolimits_{i=1}^{N_{\sf p}}{x_{i,n}^*\boldsymbol{\Sigma}_{i,n}^{-1}\left( \mathbf{y}_{i,n}-\mathbf{B}_{i,n}\mathbf{c}\right)} \right)^H\mathbf{A}_n^{-1} \right.\\
&\quad\left. \left( \sum\nolimits_{i=1}^{N_{\sf p}}{x_{i,n}^*\boldsymbol{\Sigma}_{i,n}^{-1}\left( \mathbf{y}_{i,n}-\mathbf{B}_{i,n}\mathbf{c}\right)} \right)\right\rbrace \\
=& (\mathbf{c}-\tilde{\mathbf{c}}_n)^H\mathbf{D}_n(\mathbf{c}-\tilde{\mathbf{h}}_n)+const.,
\end{split}
\end{equation}
where $\mathbf{D}_n$ and $\tilde{\mathbf{c}}_n$ are defined in (\ref{eq: D_tilde}),(\ref{eq: c_tilde}).  
It can be verified that $\mathbf{D}_n$ is positive definite by using the \textit{Cauchy}-\textit{Schwarz} inequality.
The proof of this property can be found in Appendix B.
Therefore, the optimal $\mathbf{c}$ that maximizes $\tilde{\mathcal{L}}_{\mathbf{h}_n,\mathbf{Y}}$ in (\ref{eq: LLH hn Y -const}) is $\tilde{\mathbf{c}}_n$.
We take the average over all $\tilde{\mathbf{c}}_n, n=1,...,N_{\sf p}$ to yield a reduced-variance estimate of $\mathbf{c}$.
Consequently, the resulting estimated EIC vector can be written as  
\begin{equation}\label{eq: universal estimation of c}
\tilde{\mathbf{c}}=\frac{1}{N_{\sf p}}\sum_{n=1}^{N_{\sf p}}{\tilde{\mathbf{c}}_n}.
\end{equation}

The joint interference estimation, cancellation and channel estimation algorithm is described in Algorithm 1.
\begin{algorithm}[ht]
	\caption{Estimation of EICs, Desired Channel Coefficients, and Interference Cancellation}
	\begin{algorithmic}[1]
		\FOR{$n=1:N_{\sf p}$}
		{
			\FOR{$i=1:N_{\sf p}$}
			{
				\STATE Compute $x_{i,n},\mathbf{y}_{i,n},\mathbf{B}_{i,n},\boldsymbol{\Sigma}_{i,n}$ in (\ref{eq: step1_para}),(\ref{eq: xyB}).
			}
			\ENDFOR
			\STATE Compute 	$\mathbf{A}_n,\mathbf{D}_n$, and then $\tilde{\mathbf{c}}_n$ in (\ref{eq: step1_para}), (\ref{eq: D_tilde}),(\ref{eq: c_tilde}).
		} 
		\ENDFOR
		\STATE Compute $\tilde{\mathbf{c}}$ in (\ref{eq: universal estimation of c}) and subtract the interference.
		\FOR {$n=1:N_{\sf p}$}
		{
			\STATE Estimate $\mathbf{h}_n$ as $\tilde{\mathbf{h}}_n$ in (\ref{eq: step1_para}).
		}
		\ENDFOR
		\STATE End of algorithm.
	\end{algorithmic}
\end{algorithm}

\subsection{Symbol Detection}

With the estimated $\tilde{\mathbf{c}}$, we can subtract the interference,
and
the channel coefficients at pilot positions are estimated as $\tilde{\mathbf{h}}_n$ given in
(\ref{eq: step1_para}) with $\mathbf{c}$ substituted by $\tilde{\mathbf{c}}$ 
in (\ref{eq: universal estimation of c}).
The estimated channel coefficients at pilot positions will be used for the
symbol detection as described in the following.

We will describe the symbol detection for the interval $x_i^{\sf p}x_{i,1}^{\sf d}x_{i,2}^{\sf d}...x_{i,N_{\sf d}}^{\sf d}x_{i+1}^{\sf p}$. 
The method can be applied and repeated for other intervals.
For simplicity, we omit the pilot index $i$ and superscript $(\sf d)$ in this section,
i.e., the channel coefficients are denoted as $[\mathbf{h}_h,\mathbf{h}_{1:N_{\sf d}},\mathbf{h}_t]$,
where $\mathbf{h}_h$ represents the known channel coefficient at the pilot symbol right before the considered interval
and $\mathbf{h}_t$ represents known channel coefficient at the pilot symbol right after the considered interval.

In \cite{mahamadu2018fundamental},
the optimum diversity detection is derived to
detect symbols individually based on the interpolated channel coefficients at the corresponding positions
in the interference-free scenario.
This method, however, requires expensive matrix inversion because the matrix size corresponds to the number of pilot symbols.
Alternatively, we provide two different symbol detection methods where
the first method is based on series symbol detection
which will be shown to outperform the optimum individual detector ODD
at the cost of high complexity,
while the second method achieves very close (almost identical) SER to that due to the ODD but with significantly lower complexity.
These detection methods are described in the following.

\subsubsection{Series Symbol MAP Detection (S-MAP)}
The symbols in an interval are detected as
\begin{equation} \label{eq: Series MAP detection}
\begin{split}
\tilde{\mathbf{x}}_{1:N_{\sf d}}&=\text{argmax} \quad  p\left(\mathbf{x}_{1:N_{\sf d}} |\mathbf{h}_h,\mathbf{h}_t,\mathbf{y}_{1:N_{\sf d}} \right).
\end{split}
\end{equation}

We now characterize the log likelihood function in the following theorem. 
\begin{theorem}
	The log likelihood of data symbols conditioned on the received signals and the channel coefficients at pilot positions right after and before
	the interval can be expressed in a sum of quadratic functions of data symbols $\mathbf{x}$ as 
	\begin{equation}
	\log\left(  p\left(\mathbf{x}_{1:N_{\sf d}} |\mathbf{h}_h,\mathbf{h}_t,\mathbf{y}_{1:N_{\sf d}} \right)\right)=\mathcal{F}+const.,
	\end{equation}
\end{theorem}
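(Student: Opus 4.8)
The plan is to establish the identity by Bayes' rule followed by marginalization of the unknown fading coefficients at the $N_{\sf d}$ data positions. First, observe that, as a function of $\mathbf{x}_{1:N_{\sf d}}$, $p(\mathbf{x}_{1:N_{\sf d}}\mid\mathbf{h}_h,\mathbf{h}_t,\mathbf{y}_{1:N_{\sf d}})\propto p(\mathbf{x}_{1:N_{\sf d}})\,p(\mathbf{h}_h)\,p(\mathbf{h}_t,\mathbf{y}_{1:N_{\sf d}}\mid\mathbf{x}_{1:N_{\sf d}},\mathbf{h}_h)$; since the constellation symbols are equiprobable and $p(\mathbf{h}_h)$ carries no dependence on $\mathbf{x}_{1:N_{\sf d}}$, the first two factors are absorbed into $const.$, so it suffices to evaluate $\log p(\mathbf{h}_t,\mathbf{y}_{1:N_{\sf d}}\mid\mathbf{x}_{1:N_{\sf d}},\mathbf{h}_h)$ up to an $\mathbf{x}$-independent additive constant. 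Writing $\mathbf{h}_0:=\mathbf{h}_h$, this conditional density equals $\int\prod_{i=1}^{N_{\sf d}}\big[p(\mathbf{h}_i\mid\mathbf{h}_{i-1})\,p(\mathbf{y}_i\mid\mathbf{h}_i,x_i)\big]\,p(\mathbf{h}_t\mid\mathbf{h}_{N_{\sf d}})\,d\mathbf{h}_1\cdots d\mathbf{h}_{N_{\sf d}}$, where by \eqref{eq: channel_model} (with $\sigma_{\sf h}^2=1$) the transition density is $p(\mathbf{h}_i\mid\mathbf{h}_{i-1})=\mathcal{CN}(\alpha\mathbf{h}_{i-1},(1-\alpha^2)\mathbf{I}_{N_{\sf r}})$ and the post-cancellation observation model $\mathbf{y}_i=\mathbf{h}_ix_i+\mathbf{w}_i$ gives $p(\mathbf{y}_i\mid\mathbf{h}_i,x_i)=\mathcal{CN}(x_i\mathbf{h}_i,\sigma^2\mathbf{I}_{N_{\sf r}})$. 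Because the constellation has unit modulus, all Gaussian normalization constants are $\mathbf{x}$-independent and go into $const.$, and the integrand is $e^{E}$ with $E$ a Hermitian quadratic form in $(\mathbf{h}_1,\dots,\mathbf{h}_{N_{\sf d}})$ whose coefficients are read off directly from \eqref{eq: channel_model} and \eqref{eq: Rx discrete}.

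Next I would carry out the integral by eliminating $\mathbf{h}_1,\mathbf{h}_2,\dots,\mathbf{h}_{N_{\sf d}}$ one at a time, each step using $\int_{\mathbb{C}^{N_{\sf r}}}e^{-\mathbf{h}^H\mathbf{M}\mathbf{h}+2\,\mathrm{Re}(\mathbf{h}^H\mathbf{g})}\,d\mathbf{h}=\pi^{N_{\sf r}}(\det\mathbf{M})^{-1}e^{\mathbf{g}^H\mathbf{M}^{-1}\mathbf{g}}$. At step $i$ the coefficient of $\mathbf{h}_i^H\mathbf{h}_i$ collects $\tfrac{1}{\sigma^2}\mathbf{I}_{N_{\sf r}}$ from $p(\mathbf{y}_i\mid\mathbf{h}_i,x_i)$ (using $|x_i|=1$), $\tau_1\mathbf{I}_{N_{\sf r}}$ from $p(\mathbf{h}_i\mid\mathbf{h}_{i-1})$ and $\alpha^2\tau_1\mathbf{I}_{N_{\sf r}}$ from the forward transition, where $\tau_1=1/(1-\alpha^2)$ is the precision of \eqref{eq: channel_model}, plus $-\tau_2^2\mathbf{S}_{i-1}$ ($\tau_2=\alpha\tau_1$) left behind by completing the square in $\mathbf{h}_{i-1}$ when $i>1$; this is exactly the recursion $\mathbf{S}_i^{-1}=[\tfrac{1}{\sigma^2}+(1+\alpha^2)\tau_1]\mathbf{I}_{N_{\sf r}}-\mathds{1}_{i>1}\tau_2^2\mathbf{S}_{i-1}$ of \eqref{eq: para3}, and in particular $\det\mathbf{S}_i$ is $\mathbf{x}$-independent. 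The linear coefficient of $\mathbf{h}_i$ is the sum of $\tfrac{x_i^*}{\sigma^2}\mathbf{y}_i$ from the observation, of $\tau_2\mathbf{h}_{i+1}$ from the forward transition — which for $i=N_{\sf d}$ is $\tau_2\mathbf{h}_t$, producing the $\mathds{1}_{i=N_{\sf d}}\tau_2\mathbf{h}_t$ term — and of $\tau_2\mathbf{S}_{i-1}$ times the part of the step-$(i-1)$ linear term not involving $\mathbf{h}_i$; unwinding this recursion yields $\mathbf{g}_i=\tau_2\boldsymbol{\Gamma}_{i,1}\mathbf{h}_h+\mathds{1}_{i=N_{\sf d}}\tau_2\mathbf{h}_t+\sum_{j=1}^i\tfrac{x_j^*}{\sigma^2}\boldsymbol{\Gamma}_{i,j}\mathbf{y}_j$ with $\boldsymbol{\Gamma}_{i,i}=\mathbf{I}_{N_{\sf r}}$ and $\boldsymbol{\Gamma}_{i,j}=\tau_2\mathbf{S}_{i-1}\boldsymbol{\Gamma}_{i-1,j}$, so that $\bar{\mathbf{h}}_i=\mathbf{S}_i\mathbf{g}_i$ is precisely the completed-square mean in \eqref{eq: para3}. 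Completing the square in $\mathbf{h}_i$ then deposits into the running exponent exactly the $\mathbf{h}_{i+1}$-free part of $\mathbf{g}_i$ conjugated by $\mathbf{S}_i$ (the whole of $\mathbf{g}_{N_{\sf d}}$ when $i=N_{\sf d}$) — which is the $i$-th summand of $\mathcal{F}$ in \eqref{eq: F} — and carries the $\mathbf{h}_{i+1}$-dependent remainder to step $i+1$; after the last elimination ($i=N_{\sf d}$, nothing latent remaining) the accumulated exponent equals $\mathcal{F}$, while the leftover $\mathbf{x}$-independent pieces ($\|\mathbf{y}_i\|^2/\sigma^2$, $\mathbf{h}_h^H\mathbf{h}_h$, $\mathbf{h}_t^H\mathbf{h}_t$, every $\det\mathbf{S}_i$ and every Gaussian prefactor) constitute $const.$ A clean write-up would phrase this as an induction on the elimination index (equivalently on $N_{\sf d}$).

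The main obstacle is purely the bookkeeping in the sequential elimination: one must verify that at each step the linear coefficient of $\mathbf{h}_i$ splits cleanly into a part supported on $\{\mathbf{h}_h,\mathbf{y}_1,\dots,\mathbf{y}_i\}$ — which, through the $\boldsymbol{\Gamma}_{i,j}$'s, matches the argument of the $i$-th summand of $\mathcal{F}$ — and a part proportional to $\mathbf{h}_{i+1}$ carried forward untouched, and that the cross term produced by the completion of the square is exactly $\tau_2^2\mathbf{S}_{i-1}$, so that the Schur-complement recursion for $\mathbf{S}_i$ closes. The two edge cases also warrant separate attention: $i=1$, where $\mathbf{h}_h$ plays the role of $\mathbf{h}_{i-1}$ and $\mathds{1}_{i>1}=0$, and $i=N_{\sf d}$, where the \emph{given} $\mathbf{h}_t$ enters the linear term (hence the $\mathds{1}_{i=N_{\sf d}}$ indicator) and no latent coefficient is passed on. Everything else is routine complex-Gaussian algebra.
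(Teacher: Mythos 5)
Your proposal is correct and follows essentially the same route as the paper's Appendix C: Bayes' rule plus the Markov factorization of $p(\mathbf{h}_h,\mathbf{h}_{1:N_{\sf d}},\mathbf{h}_t)\prod_i p(\mathbf{y}_i|x_i,\mathbf{h}_i)$, followed by sequential Gaussian marginalization of $\mathbf{h}_1,\dots,\mathbf{h}_{N_{\sf d}}$ by completing the square, which yields exactly the recursions for $\mathbf{S}_i$, $\boldsymbol{\Gamma}_{i,j}$, and the accumulated exponent $\mathcal{F}$ in \eqref{eq: F}--\eqref{eq: para3}. You simply make explicit the bookkeeping (and the unit-modulus assumption rendering $\det\mathbf{S}_i$ independent of $\mathbf{x}$) that the paper compresses into its steps $(d)$ and $(e)$.
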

where $\mathcal{F}$ and the related parameters can be found in (\ref{eq: F}), (\ref{eq: para3}) and Appendix C.
\begin{proof}
	The proof and related parameters can be found in Appendix C.
\end{proof}

By enumerating all possible vectors $\mathbf{x}=[x_1,...,x_{N_{\sf d}}]$
from the constellation points and calculating the corresponding $p\left(\mathbf{x}_{1:N_{\sf d}} |\mathbf{h}_h,\mathbf{h}_t,\mathbf{y}_{1:N_{\sf d}} \right)$, we
are able to obtain the optimally detected symbols by \eqref{eq: Series MAP detection}.

\subsubsection{Individual Symbol MAP Detection (I-MAP)}

The individual symbol detection method presented in \cite{nguyen2018channel}
determines the detected symbol $x_i$ based on (\ref{eq: Series MAP detection}).
However, because $\tilde{x}_i$ is computed from $\tilde{x}_j, j<i$,
this method suffers from error propagation,
which increases the error rates of symbols in the middle of the interval.
To address this limitation, we propose to estimate $x_i$ individually as 
\begin{equation} \label{eq: symbol detection new}
\begin{split}
\tilde{x}_{i}&=\text{argmax} \quad  p\left(x_{i} |\mathbf{h}_h,\mathbf{h}_t,\mathbf{y}_{i} \right).
\end{split}
\end{equation}
Using similar derivations as those used to obtain the results in Theorem 2,
we have\footnote{Upon deriving $\breve{\mathbf{h}}_i$, the normalized technique employed is similar to that employed in the well-known Maximal Ratio Combining technique.}
\begin{equation}\label{eq: x_detection}
\begin{split}
\tilde{x}_i&=\frac{\breve{\mathbf{h}}_i^H\mathbf{y}_i}{\|\breve{\mathbf{h}}_i^H\mathbf{y}_i\|}, i=1,...,N_{\sf d},\\
\breve{\mathbf{h}}_i&=\frac{\alpha^i}{1-\alpha^{2i}}\mathbf{h}_h+\frac{\alpha^{N_{\sf d}+1-i}}{1-\alpha^{2(N_{\sf d}+1-i)}}\mathbf{h}_t.
\end{split}
\end{equation}

Then, the detected symbols can be found by mapping $\tilde{x}_i$ to the closest point on the constellation.
This method does not suffer from error propagation and its achievable performance is
less sensitive to the positions $i$ of the data symbol in each detection interval.
We summarize the proposed joint channel estimation and symbol detection in Algorithm 2.
\begin{algorithm}[t]
	\caption{Symbol Detection Over Fast Fading Channel (I-MAP)}
	\begin{algorithmic}[1]
		\FOR{$n=1:N_{\sf p}$}
		{
			\FOR{$i=1:N_{\sf d}$}
			{
				\STATE Estimate $\tilde{x}^{\sf d}_{i,n}$ from (\ref{eq: x_detection}) and assign $\tilde{x}^{\sf d}_{i,n}$ to the closest point in the constellation.
			}
			\ENDFOR
		}
		\ENDFOR
		\STATE End of algorithm.
	\end{algorithmic}  
\end{algorithm}   

\subsection{Iterative Algorithm for Interference Cancellation, Channel Estimation and Symbol Detection}
In practice, the joint channel estimation, interference cancellation, and data detection are often performed iteratively 
\cite{shi2017frequency}. Moreover, if the data detection is sufficiently reliable,
detected data symbols can act as pilot symbols to support the interference cancellation and channel estimation,
which can potentially improve the detection performance.
In this section, we propose an iterative approach for
interference cancellation, channel estimation, and symbol detection based on the previous two-phase method.
For convenience purposes, we now denote the desired symbols in the frame as
$x_n, n=1,...,(N_{\sf p}-1)(N_{\sf d}+1)+1$, where $x_n, n=1, 1+N_{\sf d}+1, 1+2(N_{\sf d}+1), ...$ are pilot symbols in the previous notations.

\subsubsection{Interference Cancellation and Channel Estimation}
Since all symbols $x_n$ are known (at pilot positions) or detected (at data positions),
they are all treated as pilot symbols.
Therefore, the number of newly considered pilot symbols is now $\hat{N}_{\sf p}=(N_{\sf d}+1)(N_{\sf p}-1)+1$ (symbols in 
the whole frame) and the correlation coefficient of channel gains at two consecutive pilot positions is $\hat{\alpha}_{\sf p}=\alpha$ (instead of $\alpha^{N_{\sf d}+1}$).
The interference estimation, interference cancellation, and channel estimation are performed
as presented in Section III.A.

\subsubsection{Symbol Detection}
Let the estimated channel gains at position $n$ be $\breve{\mathbf{h}}_{n}$.
In order to detect the symbol $x_n$,
we now use the knowledge of $\breve{\mathbf{h}}_{n+1}$ and $\breve{\mathbf{h}}_{n-1}$ as if $n+1$ and $n-1$ are two pilot positions.
Apply the I-MAP technique\footnote{Now as there is only one data symbol between two pilot symbols, S-MAP and I-MAP produce identical results.}
in \eqref{eq: x_detection}, we have
\begin{equation}\label{eq: x_detection_ba}
\begin{split}
\tilde{x}_n&=\frac{\hat{\mathbf{h}}_n^H\mathbf{y}_n}{\|\hat{\mathbf{h}}_n^H\mathbf{y}_n\|}, n=2,...,(N_{\sf p}-1)(N_{\sf d}+1),\\
\hat{\mathbf{h}}_i&=\frac{\alpha}{1-\alpha^{2}}\left(\breve{\mathbf{h}}_{n-1} +\breve{\mathbf{h}}_{n+1} \right).
\end{split}
\end{equation}
After $\tilde{x}_n$ are detected, in the next iterations,
interference cancellation, channel estimation and data detection are performed until convergence is reached.
The algorithm converges when there is no change in the detected
data symbols.
Though the convergence guarantee is difficult to prove, simulation results show that the convergence is achieved after only a few iterations.
We summarize this iterative approach in Algorithm 3.

\begin{algorithm}[t]
	\caption{Iterative Algorithm for Channel Estimation, Interference Cancellation and Data Detection}
	\begin{algorithmic}[1]
		\STATE Perform Algorithm 1 for interference cancellation and channel estimation.
		\STATE Perform Algorithm 2 for I-MAP symbol detection.
		\WHILE{(true)}
		{
			\STATE Perform Algorithm 1 for interference cancellation and channel estimation with $\hat{\alpha}_{\sf p}=\alpha$ and $\hat{N}_{\sf p}=(N_{\sf d}+1)(N_{\sf p}-1)+1$.
			\STATE Perform Algorithm 2 for I-MAP symbol detection with $\hat{N}_{\sf d}=1$. The detected data symbols are denoted as $\bar{\mathbf{x}}^{\sf i}$.
			\IF {$\bar{\mathbf{x}}^{\sf i}$==$\bar{\mathbf{x}}^{\sf (i-1)}$}
			{
				\STATE Break the loop (Convergence is reached).	
			}
			\ELSE
			{
				\STATE Increase ${\sf i}$ and go to the next iteration.
			}
			\ENDIF
		}
		\ENDWHILE
		\STATE End of algorithm.
	\end{algorithmic}  
\end{algorithm}

\section{Performance Analysis}

In this section, we conduct performance analysis for the proposed design framework in Sections III.A and 
III.B\footnote{Due to the stochastic nature of the channel model and the design, analysis 
of the iterative algorithm is very involved, which is beyond the scope of this paper.
Nevertheless, the analysis of the proposed non-iterative two-phase design provides many insights that help
 explain the behaviors of the iterative algorithm.
In-depth analysis of the iterative algorithm is left for our future works.}.
For benchmarking, we first consider the interference-free scenario and 
inspect the effects of AWGN and channel evolutionary noise to the residual interference $\boldsymbol{\nu}_n$.
As shown from the analysis later, the mean square of the channel estimation error (CEE) in the interference-free
scenario approaches zero as the SNR tends to infinity.
In the considered interference scenario,
we prove that the residual interference and the channel estimation error are independent of the interfering power.
Finally, based on the analysis of the estimation error, we demonstrate how the actual residual interference affects the symbol detection
and derive the achievable SER.

In the following analysis, we investigate the channel estimation error (CEE, denoted as $\boldsymbol{\nu}_n$) 
and residual interference (denoted as $\boldsymbol{\upsilon}_n$) which are defined as follows:
\begin{equation}
\begin{split}
\boldsymbol{\nu}_n&= \mathbf{h}_n-\tilde{\mathbf{h}}_n,\\
\boldsymbol{\upsilon}_n&= \mathbf{B}_n\left( \mathbf{c}-\tilde{\mathbf{c}} \right). \\
\end{split}
\end{equation}
\subsection{Channel Estimation in Interference-free Scenario}
In the  interference-free case, the estimate of $\mathbf{h}_n$ is
\begin{equation} \label{eq: h_estimate}
\tilde{\mathbf{h}}_n=\mathbf{A}_n^{-1}\left( \sum _{i=1}^{ N}{x_{i,n}^*\boldsymbol{\Sigma}_{i,n}^{-1} \mathbf{y}_{i,n}} \right).
\end{equation}

We characterize the performance of this channel estimator in the following proposition\footnote{The fact that the 
effect of channel evolutionary noise diminishes as SNR goes to infinity suggests that the error floor in channel 
estimation reported in \cite{nguyen2018channel} comes from the residual interference. The later analysis will confirm this prediction.}.

\begin{proposition}
	The channel estimation error $\boldsymbol{\nu}_n$ has Gaussian distribution with zero mean.
	Moreover, the effect of channel evolutionary noise to the channel estimation error is negligible as the SNR tends to infinity.
\end{proposition}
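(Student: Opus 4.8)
The plan is to read off the interference-free model in \eqref{eq: Rx final}, \eqref{eq: channel_model} as a linear--Gaussian Bayesian model in which $\tilde{\mathbf{h}}_n$ of \eqref{eq: h_estimate} is exactly the posterior mean (hence the MAP estimate) of $\mathbf{h}_n$ given $\mathbf{Y}$ and $\mathbf{A}_n$ of \eqref{eq: step1_para} is the corresponding posterior precision; this makes $\boldsymbol{\nu}_n = \mathbf{h}_n - \tilde{\mathbf{h}}_n$ zero-mean complex Gaussian with covariance $\mathbf{A}_n^{-1}$, after which it suffices to bound $\mathbf{A}_n^{-1} \preceq \sigma^2\mathbf{I}_{N_{\sf r}}$ and let $\sigma^2 = \sigma_{\sf h}^2/\rho \to 0$.

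First I would make the linear--Gaussian structure explicit by recalling the whitened form produced in Appendix~A: with the interference removed, $\mathbf{y}_{i,n} = x_{i,n}\mathbf{h}_n + \mathbf{n}_{i,n}$, where $\mathbf{n}_{i,n}$ results from substituting $\mathbf{y}_i = x_i\mathbf{h}_i + \mathbf{w}_i$ together with $\mathbf{h}_i = \alpha^{|i-n|}\mathbf{h}_n + \mathbf{e}_{i,n}$ ($\mathbf{e}_{i,n}$ being the component of $\mathbf{h}_i$ orthogonal to $\mathbf{h}_n$, a linear function of the evolutionary noises $\{\boldsymbol{\Delta}_k\}$) into $\mathbf{y}_{i,n} = \mathbf{y}_i - \beta_{i,n}\mathbf{y}_{i+j_{i,n}}$ of \eqref{eq: xyB}. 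The coefficients $\omega_{i,n}, \beta_{i,n}$ and the matrices $\boldsymbol{\Sigma}_{i,n}$ of \eqref{eq: xyB}--\eqref{eq: para1} are precisely the ones for which the coefficient of $\mathbf{h}_n$ equals $x_{i,n}$, $\mathbf{h}_n \perp \mathbf{n}_{i,n}$, $\E[\mathbf{n}_{i,n}\mathbf{n}_{i,n}^H] = \boldsymbol{\Sigma}_{i,n}$, and the residuals $\mathbf{n}_{i,n}$ are mutually uncorrelated in $i$ (the purpose of the $\beta_{i,n}$-subtraction). Combined with the stationary prior $\mathbf{h}_n \sim \mathcal{CN}(\mathbf{0}, \mathbf{I}_{N_{\sf r}})$ (recall $\sigma_{\sf h}^2 = 1$), this is exactly the factorization underlying Theorem~\ref{Theorem: prob of h,Y}, whose exponent in \eqref{eq: LLH hn Y -const} is the single quadratic $-(\mathbf{h}_n - \tilde{\mathbf{h}}_n)^H\mathbf{A}_n(\mathbf{h}_n - \tilde{\mathbf{h}}_n) - \mathcal{C}_n$.

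From this the first claim is immediate: the posterior of $\mathbf{h}_n$ given $\mathbf{Y}$ is $\mathcal{CN}(\tilde{\mathbf{h}}_n, \mathbf{A}_n^{-1})$, so $\boldsymbol{\nu}_n \mid \mathbf{Y} \sim \mathcal{CN}(\mathbf{0}, \mathbf{A}_n^{-1})$; since $\mathbf{A}_n$ depends only on the system parameters and the (deterministic) pilots, not on $\mathbf{Y}$, this is also the marginal law of $\boldsymbol{\nu}_n$ (and $\boldsymbol{\nu}_n \perp \mathbf{Y}$), giving zero mean and Gaussianity. As a cross-check one can substitute $\mathbf{y}_{i,n} = x_{i,n}\mathbf{h}_n + \mathbf{n}_{i,n}$ directly into \eqref{eq: h_estimate}, use $|x_i| = 1$ so that $x_{i,n}^* x_{i,n} = \omega_{i,n}^2$ and $\sum_i \omega_{i,n}^2\boldsymbol{\Sigma}_{i,n}^{-1} = \mathbf{A}_n - \mathbf{I}_{N_{\sf r}}$ from \eqref{eq: step1_para}, and obtain $\boldsymbol{\nu}_n = \mathbf{A}_n^{-1}\mathbf{h}_n - \mathbf{A}_n^{-1}\sum_i x_{i,n}^*\boldsymbol{\Sigma}_{i,n}^{-1}\mathbf{n}_{i,n}$, a fixed linear combination of zero-mean jointly Gaussian vectors whose covariance, after the cross terms drop by the stated orthogonality, collapses to $\mathbf{A}_n^{-1}(\mathbf{I}_{N_{\sf r}} + \sum_i \omega_{i,n}^2\boldsymbol{\Sigma}_{i,n}^{-1})\mathbf{A}_n^{-1} = \mathbf{A}_n^{-1}$.

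For the last assertion I would isolate the $i=n$ term of $\mathbf{A}_n$: since $\omega_{n,n} = 1$ and $\beta_{n,n} = 0$ we have $\mathbf{y}_{n,n} = \mathbf{y}_n = x_n\mathbf{h}_n + \mathbf{w}_n$, hence $\mathbf{n}_{n,n} = \mathbf{w}_n$ and $\boldsymbol{\Sigma}_{n,n} = \sigma^2\mathbf{I}_{N_{\sf r}}$. All other summands $\omega_{i,n}^2\boldsymbol{\Sigma}_{i,n}^{-1}$ in \eqref{eq: step1_para} are positive semidefinite, so $\mathbf{A}_n \succeq \sigma^{-2}\mathbf{I}_{N_{\sf r}}$ and therefore $\mathrm{Cov}(\boldsymbol{\nu}_n) = \mathbf{A}_n^{-1} \preceq \sigma^2\mathbf{I}_{N_{\sf r}}$. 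This bound is uniform in $\sigma_{\sf h}^2$ and in $\alpha$, so as $\rho = \sigma_{\sf h}^2/\sigma^2 \to \infty$ (equivalently $\sigma^2 \to 0$) the whole error covariance vanishes; in particular the part of the estimation-error power contributed by the channel evolutionary noise goes to zero regardless of its power $\sigma_{\sf h}^2$, i.e.\ the evolutionary noise by itself produces no error floor. The main obstacle is the bookkeeping of the first step: verifying from Appendix~A that the $\beta_{i,n}$-subtraction really yields residuals $\mathbf{n}_{i,n}$ that are jointly Gaussian, orthogonal to $\mathbf{h}_n$, mutually uncorrelated across $i$, and of covariance exactly $\boldsymbol{\Sigma}_{i,n}$ --- once that structure (already implicit in Theorem~\ref{Theorem: prob of h,Y}) is in hand, the remainder is short linear algebra.
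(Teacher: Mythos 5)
Your argument for the first claim is sound and is essentially a repackaging of what the paper does: writing $\boldsymbol{\nu}_n=\mathbf{A}_n^{-1}\mathbf{h}_n-\mathbf{A}_n^{-1}\sum_i x_{i,n}^*\boldsymbol{\Sigma}_{i,n}^{-1}\mathbf{n}_{i,n}$ exhibits the error as a fixed linear map of jointly Gaussian zero-mean primitives, which gives zero-mean Gaussianity without any assumption on the correlation structure of the $\mathbf{n}_{i,n}$. Where you genuinely diverge from the paper is in the second claim: the paper (Appendix D, \eqref{eq: CEN decomposed}--\eqref{eq: CEN decomposition c multipliers}) expands $\boldsymbol{\nu}_n$ all the way down to the mutually \emph{independent} primitives $\mathbf{h}_0$, $\{\boldsymbol{\Delta}_i\}$, $\{\mathbf{w}_i\}$ and shows $\sum_i|\xi^{\sf g}_{i,n}|^2\to 1$ while $\sum_i|\xi^{\sf c}_{i,n}|^2\to 0$, i.e.\ it isolates the evolutionary-noise contribution and kills it directly, whereas you bound the \emph{total} error covariance by the posterior precision $\mathbf{A}_n^{-1}\preceq\sigma^2\mathbf{I}_{N_{\sf r}}$.

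The gap is exactly the step you defer at the end: the residuals $\mathbf{n}_{i,n}$ are \emph{not} mutually uncorrelated across $i$, so $\mathrm{Cov}(\boldsymbol{\nu}_n)=\mathbf{A}_n^{-1}$ does not hold exactly. Conditioned on $\mathbf{h}_n$, the pilot observations are noisy measurements of the Markov chain $\{\mathbf{h}_i\}$ and are therefore a hidden-Markov, not Markov, sequence; the one-step $\beta_{i,n}$-subtraction in \eqref{eq: xyB} decorrelates $\mathbf{n}_{i,n}$ only from the single adjacent observation, not from the rest. Concretely, writing $\mathbf{e}_i=\mathbf{y}_i-x_i\alpha_{\sf p}^{|i-n|}\mathbf{h}_n$, one finds for $i\ge n+3$ that $\E[\mathbf{n}_{i,n}\mathbf{n}_{i-1,n}^H]=-\beta_{i-1,n}^*\bigl(\E[\mathbf{e}_i\mathbf{e}_{i-2}^H]-\beta_{i,n}\E[\mathbf{e}_{i-1}\mathbf{e}_{i-2}^H]\bigr)$, which is proportional to $\alpha_{\sf p}^2\sigma^2\bigl(1-\alpha_{\sf p}^{2(i-2-n)}\bigr)/\bigl(1+\rho(1-\alpha_{\sf p}^{2(i-1-n)})\bigr)$ and is nonzero whenever $\sigma^2>0$ (this is also why the factorization \eqref{eq: AL1_LLH} underlying Theorem~\ref{Theorem: prob of h,Y} is itself an approximation rather than exact). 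Your asymptotic conclusion can be rescued, since these cross terms enter the true covariance as $\mathbf{A}_n^{-1}[\cdots]\mathbf{A}_n^{-1}=O(\sigma^4)$ against the $O(\sigma^2)$ main term, but that estimate is not in your write-up and the clean identity you lean on would fail verification. Two smaller points: (i) the proposition asks specifically about the evolutionary-noise \emph{component}, so after bounding the total you should add that the AWGN and evolutionary-noise contributions are independent and hence their powers are additive, so a vanishing total forces the evolutionary-noise part to vanish; (ii) your uniformity remark "uniform in $\alpha$" is stronger than needed and slightly misleading, since the non-asymptotic error does depend on $\alpha_{\sf p}$ through the off-diagonal terms above.
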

\begin{proof}
	Please see Appendix D.
\end{proof}
\subsection{Residual Interference Analysis}
For the derived estimators for $\mathbf{c}$ and $\mathbf{h}_n$ under the considered interference scenario,  
the resulting residual interference is characterized in the following propositions.

\begin{proposition}
	The EIC estimation is unbiased and the residual interference follows the Gaussian distribution with zero mean. 
	Moreover, the residual interference is independent of $\mathbf{c}$ and has bounded power as the interference power goes to infinity.
\end{proposition}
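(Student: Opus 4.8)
The plan is to write the estimation error $\mathbf{c}-\tilde{\mathbf{c}}$ in closed form as an affine functional of the Gaussian noises and the channel prior, and then read off the four asserted properties. First I would substitute the true signal model into the closed‑form estimator \eqref{eq: c_tilde}. With the whitened quantities of \eqref{eq: xyB}, under the true channel $\mathbf{h}_n$ and the true EIC (denote it $\mathbf{c}$), one has $\mathbf{y}_{i,n}=x_{i,n}\mathbf{h}_n+\mathbf{B}_{i,n}\mathbf{c}+\mathbf{e}_{i,n}$, where $\mathbf{e}_{i,n}$ is the zero‑mean Gaussian noise with covariance $\boldsymbol{\Sigma}_{i,n}$ produced by the whitening of Appendix A (it lumps the AWGN of $\mathbf{y}_i,\mathbf{y}_{i+j_{i,n}}$ with the channel‑evolutionary noise, the latter weighted by the known pilot symbols). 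Plugging this into the quadratic $\mathcal{C}_n(\mathbf{c})=(\mathbf{c}-\tilde{\mathbf{c}}_n)^H\mathbf{D}_n(\mathbf{c}-\tilde{\mathbf{c}}_n)+const.$ and setting its gradient to zero yields
\begin{equation}
\tilde{\mathbf{c}}_n=\mathbf{c}+\mathbf{D}_n^{-1}\mathbf{g}_n,\qquad \mathbf{g}_n=\mathbf{P}_n^H\mathbf{A}_n^{-1}\mathbf{h}_n+\mathbf{s}_n-\mathbf{P}_n^H\mathbf{A}_n^{-1}\mathbf{r}_n,
\end{equation}
with $\mathbf{P}_n=\sum_i x_{i,n}^*\boldsymbol{\Sigma}_{i,n}^{-1}\mathbf{B}_{i,n}$, $\mathbf{r}_n=\sum_i x_{i,n}^*\boldsymbol{\Sigma}_{i,n}^{-1}\mathbf{e}_{i,n}$, $\mathbf{s}_n=\sum_i\mathbf{B}_{i,n}^H\boldsymbol{\Sigma}_{i,n}^{-1}\mathbf{e}_{i,n}$. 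The key algebraic facts are $\sum_i x_{i,n}\mathbf{B}_{i,n}^H\boldsymbol{\Sigma}_{i,n}^{-1}=\mathbf{P}_n^H$, the identity $\sum_i|x_{i,n}|^2\boldsymbol{\Sigma}_{i,n}^{-1}=\mathbf{A}_n-\mathbf{I}_{N_{\sf r}}$ for unit‑modulus pilots, and the definition $\mathbf{D}_n=\sum_i\mathbf{B}_{i,n}^H\boldsymbol{\Sigma}_{i,n}^{-1}\mathbf{B}_{i,n}-\mathbf{P}_n^H\mathbf{A}_n^{-1}\mathbf{P}_n$ in \eqref{eq: D_tilde}; these make the $\mathbf{c}$‑dependence on the right collapse to $\mathbf{D}_n\mathbf{c}$ and leave only the single residual $\mathbf{h}_n$‑term. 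Crucially, $\mathbf{g}_n$ involves only $\mathbf{h}_n$ and $\{\mathbf{e}_{i,n}\}$ — never $\mathbf{c}$.

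From this, the first three claims are immediate. Averaging over pilot positions, $\tilde{\mathbf{c}}=\mathbf{c}+\tfrac1{N_{\sf p}}\sum_n\mathbf{D}_n^{-1}\mathbf{g}_n$; since $\mathbf{h}_n\sim\mathcal{CN}(\mathbf{0},\mathbf{I}_{N_{\sf r}})$ (the prior implied by the $-\mathbf{h}_n^H\mathbf{h}_n$ term of \eqref{eq: LLH h_n Y}) and $\E[\mathbf{e}_{i,n}]=\mathbf{0}$, we get $\E[\tilde{\mathbf{c}}]=\mathbf{c}$, i.e. the EIC estimate is unbiased (the $\mathbf{h}_n$‑term averages out against the prior). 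The error $\mathbf{c}-\tilde{\mathbf{c}}=-\tfrac1{N_{\sf p}}\sum_n\mathbf{D}_n^{-1}\mathbf{g}_n$ is a fixed linear combination of the jointly Gaussian vectors $\{\mathbf{h}_n,\mathbf{e}_{i,n}\}$, hence zero‑mean Gaussian; and since $\mathbf{B}_n$ is deterministic and known, $\boldsymbol{\upsilon}_n=\mathbf{B}_n(\mathbf{c}-\tilde{\mathbf{c}})$ is an affine image of a Gaussian vector, hence zero‑mean Gaussian. Finally, none of $\mathbf{B}_{i,n},\boldsymbol{\Sigma}_{i,n},\mathbf{A}_n,\mathbf{D}_n,\mathbf{P}_n$, nor the laws of $\mathbf{h}_n,\mathbf{e}_{i,n}$, depends on the value of $\mathbf{c}$, so the distribution of $\boldsymbol{\upsilon}_n$ is independent of $\mathbf{c}$.

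For the boundedness claim I would run a scaling argument. Increasing the interfering power corresponds to $\mathbf{B}_k\mapsto\gamma\mathbf{B}_k$ for all $k$ (a larger interfering channel gain or symbol power), $\gamma\to\infty$; note $\boldsymbol{\Sigma}_{i,n},\mathbf{A}_n,\beta_{i,n},x_{i,n}$ and the noises $\mathbf{e}_{i,n},\mathbf{h}_n$ are all unaffected, since they involve only AWGN, evolutionary noise and pilot symbols. Then $\mathbf{B}_{i,n}\mapsto\gamma\mathbf{B}_{i,n}$, so $\mathbf{P}_n\mapsto\gamma\mathbf{P}_n$, $\mathbf{s}_n\mapsto\gamma\mathbf{s}_n$, $\mathbf{D}_n\mapsto\gamma^2\mathbf{D}_n$, whence $\mathbf{g}_n\mapsto\gamma\mathbf{g}_n$ and $\mathbf{D}_n^{-1}\mathbf{g}_n\mapsto\gamma^{-1}\mathbf{D}_n^{-1}\mathbf{g}_n$. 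Therefore $\boldsymbol{\upsilon}_n=\tfrac1{N_{\sf p}}\mathbf{B}_n\sum_m\mathbf{D}_m^{-1}\mathbf{g}_m$ is invariant under the scaling — the factor $\gamma$ in $\mathbf{B}_n$ cancels the $\gamma^{-1}$ from $\mathbf{D}_m^{-1}\mathbf{g}_m$ — so $\E[\boldsymbol{\upsilon}_n\boldsymbol{\upsilon}_n^H]$ is constant in $\gamma$ and in particular bounded as $\gamma\to\infty$. The same bookkeeping exhibits the limiting covariance as driven by the $\mathbf{h}_n$‑term and by the evolutionary‑noise component of $\boldsymbol{\Sigma}_{i,n}$, neither of which vanishes even as $\sigma^2\to0$ — this is the irreducible fast‑fading floor announced in the introduction (and the $\|\mathbf{c}\|\to\infty$ reading of the claim is already subsumed by the $\mathbf{c}$‑independence property).

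The main obstacle is the first step: performing the substitution into \eqref{eq: c_tilde} cleanly enough to isolate the error as $\mathbf{D}_n^{-1}\mathbf{g}_n$. This requires faithfully tracking the whitened blocks \eqref{eq: xyB}–\eqref{eq: para1}, confirming (as the content of Appendix A/Theorem 1) that $\mathbf{e}_{i,n}:=\mathbf{y}_{i,n}-x_{i,n}\mathbf{h}_n-\mathbf{B}_{i,n}\mathbf{c}$ is genuinely zero‑mean Gaussian with covariance $\boldsymbol{\Sigma}_{i,n}$ and is independent of $\mathbf{h}_n$, and verifying the identity $\sum_i|x_{i,n}|^2\boldsymbol{\Sigma}_{i,n}^{-1}=\mathbf{A}_n-\mathbf{I}_{N_{\sf r}}$ that makes the $\mathbf{h}_n$‑dependence collapse to the lone term $\mathbf{P}_n^H\mathbf{A}_n^{-1}\mathbf{h}_n$. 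Tracking the cross‑covariances among the $\{\mathbf{D}_n^{-1}\mathbf{g}_n\}_n$ (they share observations) is needed only for an explicit covariance formula, not for any of the four asserted properties.
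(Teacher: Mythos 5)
Your proposal is correct and follows essentially the same route as the paper's Appendix E: substitute the signal model into the closed-form estimator \eqref{eq: c_tilde}, observe that the $\mathbf{c}$-dependent terms collapse to $\mathbf{D}_n\mathbf{c}$ so the error $\tilde{\mathbf{c}}_n-\mathbf{c}$ is a $\mathbf{c}$-free linear combination of zero-mean jointly Gaussian quantities (the paper parametrizes it as $\mathbf{D}_n^{-1}\sum_i\mathbf{G}_{i,n}(\mathbf{h}_ix_i+\mathbf{w}_i)$ rather than via $\mathbf{h}_n$ and the whitened residuals, but this is the same decomposition), and then establish boundedness by noting the error is of total degree zero in the interference matrices. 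Your explicit $\gamma$-scaling invariance is just a cleaner statement of the paper's "quadratic over quadratic, hence zero-th order in $\mathbf{B}_i$" counting, so no substantive difference.
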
 
\begin{proof}
	Please see Appendix E.
\end{proof}

\begin{proposition}
	There is a floor for the residual interference power, i.e., as $\rho$ goes to infinity, the residual interference power approaches $\tilde{\sigma}^2_{\sf i}=\frac{\alpha_{\sf p}^2(1-\alpha_{\sf p}^2)}{N_{\sf p}}$.
\end{proposition}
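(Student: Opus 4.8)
The starting point is the (approximate) closed form for the residual interference power $\E[\|\boldsymbol{\upsilon}_n\|^2]$ derived in Appendix E while establishing Proposition 2; Proposition 3 then follows by letting $\rho\to\infty$ in that expression. Since $\rho=\sigma_{\sf h}^2/\sigma^2$ with $\sigma_{\sf h}^2=1$ held fixed, the regime $\rho\to\infty$ coincides with $\sigma^2\to 0$. Substituting the signal model for $\mathbf{y}_{i,n}$ into $\tilde{\mathbf{c}}_n=\mathbf{D}_n^{-1}(\cdots)$ from \eqref{eq: c_tilde}, the estimation error $\tilde{\mathbf{c}}_n-\mathbf{c}$ splits into a term linear in the AWGN $\mathbf{w}_k$ and a term linear in the channel evolutionary noise $\boldsymbol{\Delta}_k$. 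The first step is to note that the AWGN contribution has mean square of order $\sigma^2$, hence vanishes in the limit, so the floor is entirely due to the channel evolutionary noise $\boldsymbol{\Delta}_k$, consistent with the prediction in the footnote following Proposition 1.

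Second, I would compute the $\rho\to\infty$ limits of the auxiliary quantities in \eqref{eq: para1}--\eqref{eq: c_tilde}. The relevant elementary limits are $\omega_{i,n}\to\alpha_{\sf p}$ for $|i-n|=1$ and $\omega_{i,n}\to 0$ for $|i-n|\ge 2$ (with $\omega_{n,n}=1$), together with $\beta_{i,n}\to 0$ for $|i-n|\le 1$ and $\beta_{i,n}\to x_ix_{i+j_{i,n}}^*\alpha_{\sf p}$ for $|i-n|\ge 2$, as well as the corresponding limits of $\boldsymbol{\Sigma}_{i,n}$. After the rescaling induced by $\sigma^2\to 0$, these collapse $\mathbf{A}_n$ and $\mathbf{D}_n$ to a tractable form in which effectively only pilot $n$ and its immediate neighbours $n\pm 1$ enter the estimator. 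Carrying the $\boldsymbol{\Delta}_k$-driven term through $\mathbf{D}_n^{-1}$, the per-position error covariance $\mathrm{Cov}(\tilde{\mathbf{c}}_n-\mathbf{c})$ converges to a multiple of $\alpha_{\sf p}^2(1-\alpha_{\sf p}^2)$: the factor $1-\alpha_{\sf p}^2$ is the variance of the effective channel evolutionary noise accumulated over one pilot spacing (recall $\alpha_{\sf p}=\alpha^{N_{\sf d}+1}$), and the factor $\alpha_{\sf p}^2$ is the squared weight $\omega_{i,n}^2$ carried by the neighbouring pilots in the limit.

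Third, I would average over pilot positions via $\tilde{\mathbf{c}}=\tfrac{1}{N_{\sf p}}\sum_{n=1}^{N_{\sf p}}\tilde{\mathbf{c}}_n$ from \eqref{eq: universal estimation of c}. Under the approximations already used in Appendix E, the $N_{\sf p}$ per-position errors behave as approximately uncorrelated, so the covariance of the averaged estimate is $1/N_{\sf p}$ times that of a single $\tilde{\mathbf{c}}_n$. Feeding the limiting covariance into $\E[\|\boldsymbol{\upsilon}_n\|^2]=\mathrm{tr}\!\big(\mathbf{B}_n^{H}\mathbf{B}_n\,\mathrm{Cov}(\tilde{\mathbf{c}}-\mathbf{c})\big)$ and invoking the normalization of $\mathbf{B}_n$ and $\mathbf{c}$ fixed in Appendix E, the trace factor reduces to one, which gives $\lim_{\rho\to\infty}\E[\|\boldsymbol{\upsilon}_n\|^2]=\tilde{\sigma}^2_{\sf i}=\frac{\alpha_{\sf p}^2(1-\alpha_{\sf p}^2)}{N_{\sf p}}$. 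The interchange of limit and expectation required here is licensed by the uniform boundedness of the residual interference power already established in Proposition 2.

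The main obstacle is the bookkeeping in the second and third steps: one must propagate the normalization constants carefully through $\mathbf{D}_n^{-1}$ and through the trace against $\mathbf{B}_n^{H}\mathbf{B}_n$ so that the prefactor comes out exactly as $\alpha_{\sf p}^2(1-\alpha_{\sf p}^2)$ and not some other product of these matrices, and one must control the cross-correlations between $\tilde{\mathbf{c}}_m$ and $\tilde{\mathbf{c}}_n$ (which genuinely share observations $\mathbf{y}_i$ and evolutionary-noise samples) so that the $1/N_{\sf p}$ averaging gain is valid in the limit to the order of the approximation. One should also check that the boundary positions $n\in\{1,N_{\sf p}\}$, having only one neighbouring pilot, do not change the limiting value.
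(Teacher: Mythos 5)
Your overall strategy matches the paper's Appendix F: start from the per-position error covariance \eqref{eq: covar_c} established in Appendix E, compute the $\rho\to\infty$ limits of $\omega_{i,n}$, $\beta_{i,n}$, $\sigma_{i,n}^2$, $\mathbf{A}_n$, $\mathbf{D}_n$ (your elementary limits for $\omega_{i,n}$ and $\beta_{i,n}$ are correct), and substitute. However, there are two genuine problems in how you then extract the constant. First, the claim that in the limit ``effectively only pilot $n$ and its immediate neighbours $n\pm 1$ enter the estimator'' is false: while $\omega_{i,n}\to 0$ for $|i-n|\ge 2$, the weights $\boldsymbol{\Sigma}_{i,n}^{-1}\to\frac{1}{1-\alpha_{\sf p}^2}\mathbf{I}_{N_{\sf r}}$ do \emph{not} vanish, so the terms $\mathbf{B}_{i,n}^H\boldsymbol{\Sigma}_{i,n}^{-1}\mathbf{y}_{i,n}$ in \eqref{eq: c_tilde} survive for every $i$; accordingly the paper finds $\mathbf{J}_{i,n}$ and $\mathbf{G}_{i,n}\mathbf{G}_{i,n}^H$ nonvanishing for all $i\neq n$ and $\mathbf{D}_n\to N_{\sf r}N_{\sf p}\frac{1+\alpha_{\sf p}^2}{1-\alpha_{\sf p}^2}\mathbf{I}_L$, i.e.\ proportional to $N_{\sf p}$.

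Second, and consequently, you misattribute the source of the $1/N_{\sf p}$ factor. In the paper it arises already at the level of a single $\tilde{\mathbf{c}}_n$: the error is $\mathbf{D}_n^{-1}\sum_{i=1}^{N_{\sf p}}\mathbf{G}_{i,n}(\mathbf{h}_ix_i+\mathbf{w}_i)$ with $\mathbf{D}_n^{-1}=O(1/N_{\sf p})$ and the inner covariance sum $O(N_{\sf p})$, so the per-position covariance is itself $O(1/N_{\sf p})$. Your route instead takes the per-position covariance to be $O(1)$ (a ``multiple of $\alpha_{\sf p}^2(1-\alpha_{\sf p}^2)$'') and obtains $1/N_{\sf p}$ from averaging $N_{\sf p}$ \emph{approximately uncorrelated} per-position errors. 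That independence assumption is not tenable -- every $\tilde{\mathbf{c}}_n$ is built from the same observations $\mathbf{y}_1,\dots,\mathbf{y}_{N_{\sf p}}$ and hence the same channel realizations and noise, and in the limit every pilot contributes with $O(1)$ weight to every $\tilde{\mathbf{c}}_n$, so the errors are strongly cross-correlated and averaging cannot deliver a clean $1/N_{\sf p}$ variance reduction. If you were to combine your averaging argument with the correct per-position scaling, you would obtain $O(1/N_{\sf p}^2)$, which contradicts the stated floor. To repair the proof, drop the locality and independence heuristics and carry out the paper's computation: evaluate the limiting $\mathbf{J}_{i,n}$, $\mathbf{G}_{i,n}\mathbf{G}_{i,n}^H$ and the cross terms weighted by $\alpha_{\sf p}^{|i-j|}$, and let the $1/N_{\sf p}$ emerge from $\mathbf{D}_n^{-1}$.
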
 
\begin{proof}
	Please see Appendix F.
\end{proof}

The channel estimation is performed based on the observations after interference cancellation.
Therefore, the floor of residual interference corresponds to the floor in channel estimation performance.
This also means that the achieved SINR after cancellation is bounded. This result is stated
in the following proposition.

\begin{proposition}
	As the SNR goes to infinity, the SINR after interference cancellation\footnote{Since the interference is efficiently canceled, 
		the probably most important parameter before interference cancellation is the SNR; therefore, we use the term "SNR before cancellation" but not "SINR before cancellation" to reflect this. After interference cancellation, the residual interference  is 
		irreducible and affects directly the performance of the detection process; hence, the term "SINR after cancellation" is used.} approaches
	$\tilde{\rho}=\frac{N_{\sf p}}{\alpha_{\sf p}^2(1-\alpha_{\sf p}^2)}$.
\end{proposition}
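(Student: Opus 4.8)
The plan is to obtain this proposition as an immediate corollary of the residual-interference floor in Proposition~3 together with the vanishing of the AWGN power in the high-SNR regime. First I would write out the signal model seen by the detector after interference cancellation: subtracting $\mathbf{B}_n\tilde{\mathbf{c}}$ from the observation $\mathbf{y}_n$ gives $\mathbf{y}_n-\mathbf{B}_n\tilde{\mathbf{c}}=\mathbf{h}_nx_n+\boldsymbol{\upsilon}_n+\mathbf{w}_n$, where $\boldsymbol{\upsilon}_n=\mathbf{B}_n(\mathbf{c}-\tilde{\mathbf{c}})$ is the residual interference and $\mathbf{w}_n$ is AWGN with per-antenna power $\sigma^2$. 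By Proposition~2 the term $\boldsymbol{\upsilon}_n$ is zero-mean Gaussian, and I would note (reading off the construction in Appendix~E) that it is uncorrelated both with $\mathbf{w}_n$ and with the desired term $\mathbf{h}_nx_n$, so that the three contributions add at the level of second-order moments.

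Next I would define the SINR after cancellation as the ratio of useful signal power to combined residual-interference-plus-noise power. With unit-energy symbols and $\sigma_{\sf h}^2=1$, we have $\E\|\mathbf{h}_nx_n\|^2=N_{\sf r}\sigma_{\sf h}^2$ and $\E\|\mathbf{w}_n\|^2=N_{\sf r}\sigma^2$; writing $\sigma_{\boldsymbol{\upsilon}}^2=\E\|\boldsymbol{\upsilon}_n\|^2/N_{\sf r}$ for the per-antenna residual interference power, the SINR after cancellation equals $\sigma_{\sf h}^2/(\sigma_{\boldsymbol{\upsilon}}^2+\sigma^2)$, which is independent of $N_{\sf r}$ and reduces to $1/(\sigma_{\boldsymbol{\upsilon}}^2+\sigma^2)$ under the normalization $\sigma_{\sf h}^2=1$.

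Finally I would take the limit. Since $\rho=\sigma_{\sf h}^2/\sigma^2$ with $\sigma_{\sf h}^2=1$, letting $\rho\to\infty$ forces $\sigma^2=1/\rho\to 0$; and Proposition~3 gives $\sigma_{\boldsymbol{\upsilon}}^2\to\tilde{\sigma}_{\sf i}^2=\alpha_{\sf p}^2(1-\alpha_{\sf p}^2)/N_{\sf p}$. Therefore the SINR after cancellation converges to $1/\tilde{\sigma}_{\sf i}^2=N_{\sf p}/(\alpha_{\sf p}^2(1-\alpha_{\sf p}^2))=\tilde{\rho}$, as claimed. The only delicate point is the bookkeeping in the first step: confirming that the cross terms among residual interference, AWGN, and the signal vanish in expectation, and being explicit about the normalization conventions (unit symbol energy, $\sigma_{\sf h}^2=1$, per-antenna versus aggregate power). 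Once those are pinned down, the statement is essentially a one-line limit built on Propositions~2 and~3.
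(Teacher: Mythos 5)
Your proposal is correct and follows essentially the same route as the paper: both arguments reduce the statement to a one-line limit in which every impairment other than the residual interference vanishes as $\rho\to\infty$, leaving the SINR equal to $1/\tilde{\sigma}_{\sf i}^2$ with $\tilde{\sigma}_{\sf i}^2$ taken from Proposition~3. The only cosmetic difference is that the paper attributes the vanishing term to the channel estimation error via Proposition~1 while you track the AWGN power $\sigma^2=1/\rho$ directly; both disappear in the limit, so the arguments coincide in substance.
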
 
\begin{proof}
	After interference cancellation, the achievable SINR is affected by the channel estimation error and the residual interference. 
	According to Proposition 1, the channel estimation error vanishes as $\rho \to \infty$. Hence, the SINR after interference cancellation is $1/\tilde{\sigma}_{\sf i}^2$, where $\tilde{\sigma}_{\sf i}^2$ is given in Proposition 3.
\end{proof}

\subsection{SER Analysis}
The unnormalized $\tilde{x}_i$ in \eqref{eq: x_detection} is $\breve{\mathbf{h}}_i^H(\mathbf{h}_ix_i+\tilde{\mathbf{w}}_i)$, where $\tilde{\mathbf{w}}_i$ is
the sum of the additive Gaussian noise and residual interference with the corresponding covariance matrix of $(\sigma^2+\sigma_{\sf i}^2)\mathbf{I}_{N_{\sf r}}$.
Conditioned on $\mathbf{h}_h$ and $\mathbf{h}_t$, the equivalent SNR for symbol detection of $x_i$ can be expressed as
\begin{equation} \label{eq: equivalent noise}
\rho_{i}^{\sf e}{=}\frac{\alpha^{2i}\left|  \left\|\mathbf{h}_h\right\|^2\frac{\alpha^i}{1-\alpha^{2i}} + \mathbf{h}_h^H\mathbf{h}_t\frac{\alpha^j}{1-\alpha^{2j}} \right|  ^2}
{(\sigma^2{+}\sigma_{\sf i}^2{+}1{-}\alpha^{2i})\left|  \mathbf{h}_h^H\mathbf{1}_{N_{\sf r}}\frac{\alpha^i}{1-\alpha^{2i}} + \mathbf{h}_t^H\mathbf{1}_{N_{\sf r}}\frac{\alpha^j}{1-\alpha^{2j}} \right|  ^2},
\end{equation}
where $j=N_{\sf d}+1-i$ and $\sigma_{\sf i}^2$ can be computed from (\ref{eq: covar_c}) or approximated by $\tilde{\sigma}_{\sf i}^2$ in Proposition 3 for large $\rho$.
Thus, the SER at symbol position $i$ can be calculated as
\begin{equation} \label{eq: Pe_i}
P_i^{\sf e}=\int_{}^{}{p(\mathbf{h}_h,\mathbf{h}_t)f_{\sf e}(\rho_{i}^{\sf e})d\mathbf{h}_hd\mathbf{h}_t},
\end{equation}
where $f_{\sf e}(\rho)$ is the error rate corresponding to
instantaneous $\rho$.
For the QPSK modulation,
\begin{equation} \nonumber
f_{\sf e}(\rho){=}\text{erfc}\left(\!\! \sqrt{\rho/2} \right)\!{-}\frac{1}{4} \text{erfc}^2\left( \sqrt{\rho/2} \right),
\end{equation}
and $\text{erfc}(x){=}\frac{2}{\sqrt{\pi}}\int_{x}^{\infty}{e^{-x^2}dx}$
is the complementary error function.
The closed-form expression for $P_i^{\sf e}$ in (\ref{eq: Pe_i}) is difficult to derive.
However, $P_i^{\sf e}$ can be computed accurately by using numerical integration
or by Monte Carlo simulation.
Finally, the overall average SER can be expressed as

\begin{equation} \label{eq: SER}
P^{\sf e}=\frac{1}{N_{\sf d}}\sum_{i=1}^{N_{\sf d}}{P_i^{\sf e}}.
\end{equation}

\subsection{Throughput Analysis}
The throughput is defined as the average number of successfully transmitted data symbol per symbol period, which is averaged over the frame interval. Note that there are $N_{\sf d}$ transmitted data symbols between two consecutive pilot symbols and the frame consists of $N_{\sf p}$ pilot symbols as shown in Fig. \ref{fig: pilot}. Considering the average SER $P^{\sf e}$ in \eqref{eq: SER}, the throughput can be calculated as
\begin{equation} \label{eq: throughput}
{\sf TP} = (1-P^{\sf e})\frac{N_{\sf d}(N_{\sf p}-1)}{(N_{\sf d}+1)(N_{\sf p}-1)+1},
\end{equation}
where, the numerator of the second term of \eqref{eq: throughput} is the number of data symbols transmitted,
and the denominator is the frame length.

The pilot density is defined as $1/(N_{\sf d}+1)$.
It can be verified that when we increase the pilot density (i.e., $N_{\sf d}$ is decreased), $P_{\sf e}$ decreases; thus the first term in (\ref{eq: throughput}) increases. However, the increasing pilot density leads to higher pilot overhead which reduces the second 
term in (\ref{eq: throughput}) and vice versa.
Therefore, there is a trade-off between transmission reliability and throughput,
which suggests that there exists an optimal value of the pilot density that achieves the maximum throughput.

Because the SER in \eqref{eq: Pe_i} and the average SER in \eqref{eq: SER} cannot be expressed in closed form, 
the optimal pilot density for given $\alpha$ and $\rho$
can be found effectively by using the bisection search method.

\subsection{Complexity Analysis}  

	For uncorrelated desired channels, the complexity of our proposed
	interference cancellation, channel estimation and symbol detection
	is linear in the number of antennas,
	since all involved matrix inversions simply become divisions.
	In the first phase, the complexity of EIC estimation is $\mathcal{O}(N_{\sf r}N_{\sf p}^2)$
	and the complexity of channel estimations at pilot positions is $\mathcal{O}(N_{\sf r}N_{\sf p})$.
	In the second phase, while the exhaustive-search based symbol detection approach
	has the complexity growing exponentially with the number of data symbols and the constellation size,
	our proposed I-MAP detection does not depend on the constellation size and has linear complexity in the number of data symbols.
	Particularly, the complexity of the I-MAP detection is $\mathcal{O}(N_{\sf r}N_{\sf d}N_{\sf p})$ which is also linear in the frame length.
	Therefore, the overall complexity of the proposed two-phase design with I-MAP is $\mathcal{O}\left(N_{\sf r}N_{\sf p}(N_{\sf p}+N_{\sf d})\right)$.
    The complexity of the iterative method presented in Section III.C is $\mathcal{O}\left(IN_{\sf r}N_{\sf p}^2N_{\sf d}^2\right)$\footnote{In order to obtain this result, we note that the number of considered pilot symbols in the iterative method is equal to the frame length.}, where $I$ is the average number of iterations to achieve convergence.

\section{Numerical Results}

\subsection{Simulation Settings}

We consider the simulation setting in which the desired receiver has $N_{\sf r}=2$ antennas,
the coefficient $\alpha$ is chosen in the set 
$\left\lbrace 0.95, 0.97, 0.99, 0.995, 0.999\right\rbrace $\footnote{In Clarke's mode, $\alpha=J_0(2\pi f_D T^{\sf d})$,
		where $f_D$ is the maximum Doppler spread  \cite{tse2005fundamentals} (recall that $T^{\sf d}$ is the symbol period 
		of the desired signal).
		Specifically, $\alpha=0.999$ corresponds to 150 Hz of Doppler
		spread with symbol rate of 15 Kbps.
		If the desired signal is carried at 900MHz, the corresponding velocity of the desired Rx is 50m/s.}.
The bandwidth of the interfering signal is two times of the bandwidth of the desired signal, which are $30kHz$ and $15kHz$, respectively.
The frequency spacing $\Delta_f$ between interfering and desired signals will be normalized
as $\Delta_f T^{\sf d}$ where $T^{\sf d}$ denotes the symbol time of the desired signal.
We assume that the QPSK modulation is employed;
both interfering and interfered signals use the root-raised-cosine pulse
shaping function.
Moreover, the pulse shaping functions $p^{\sf d}(t)$ and $p^{\sf i}(t)$ 
are assumed to have
the roll-off factor equal to $0.25$.

The interference power is set as strong as the power of the desired signal and the frequency 
spacing $\Delta_f =1/T^{\sf d}$ unless stated otherwise.
The number of pilot symbols is set equal to 51. Moreover,
the pilot density is chosen in the set $\left\lbrace 25\%, 10\%\right\rbrace$ 
corresponding to $\left\lbrace3,9\right\rbrace$  data symbols between two pilot symbols, respectively.
Furthermore, for throughput simulation results, we show the throughputs obtained for various pilot densities ranging from $50\%$ to $6.25\%$.
The results presented in this section are obtained by 
averaging over $10^4$ random realizations.

\subsection{Performance of the Proposed Channel Estimation Technique}
For the interference-free scenario, we investigate the effect of different parameters to the channel estimation errors.
We note that the performance of the channel estimation technique presented in this section depends mainly on $N_{\sf d}$ and $\alpha$.
	Specifically, the performance depends on $\alpha_{\sf p}$ which is the correlation coefficient of
	channel gains at two consecutive pilot positions (see Appendix A and Theorem 1).
	Different values of $N_{\sf d}$ (different pilot densities) have the corresponding values of $\alpha_{\sf p}$.
We will show the numerical channel estimation mean squared error (CMSE)
which is calculated as
\begin{equation}
\begin{split}
{\sf CMSE}
&=\frac{1}{N_{\sf p}N_{\sf r}} \sum_{n=1}^{N_{\sf p}}{
\textbf{tr}\left(\E \left[ \left( \mathbf{h}_n-\tilde{\mathbf{h}}_n\right)\left( \mathbf{h}_n-\tilde{\mathbf{h}}_n\right)^H \right]
\right)}.\\
\end{split}	
\end{equation}

\begin{figure}[ht]
	\centering
	\includegraphics[scale=0.275]{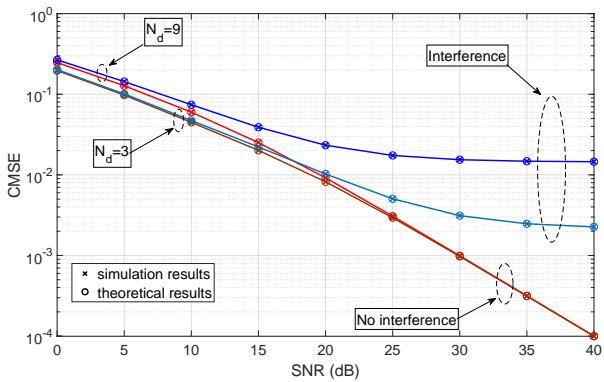}
	\caption{Channel estimation mean squared error, $\alpha=0.99$}
	\label{fig: noniterative CMSE vs SNR}
\end{figure}

In Fig. \ref{fig: noniterative CMSE vs SNR}, we show the channel estimation error due to our proposed
design for different values of $N_{\sf d}$ (equivalently, different values of pilot density),
when there is no interference (IF) and when there is interference (IP).
When $N_{\sf d}$ increases, the channel estimation mean squared error also increases as expected.
For the interference-free scenario, the corresponding error curves converge to each other and decrease almost linearly
as the SNR increases (both curves are plotted in the log scale). This
means that the impact of the fast fading is diminished in the high SNR regime.
When the interference is present, there is a performance floor for channel estimation error.
The results in Fig. \ref{fig: noniterative CMSE vs SNR} also validate the theoretical results stated in Propositions 1, 3,
and 4 about the channel estimation errors in the scenarios without and with interference.

\begin{figure}[ht]  
	\centering
	\includegraphics[scale=0.37]{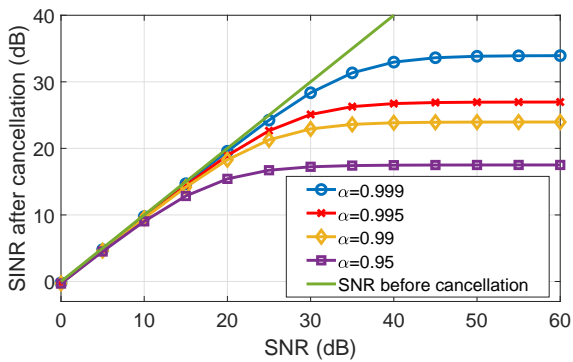} 
	\caption{SINR after cancellation for different values of channel correlation coefficient, $N_{\sf d}=3$}  
	\label{fig: SINR after cancellation}
\end{figure} 

In Fig. \ref{fig: SINR after cancellation}, we show the achieved SINR after interference cancellation versus the SNR for different values of channel correlation coefficient $\alpha$.
Two noticeable observations can be drawn from this figure.
First, it can be seen that the achieved SINR increases with increasing SNR before becoming saturated. In the low SNR regime, however, the residual interference  has almost no impact on the achieved SINR after interference cancellation, i.e., the SINR curves after interference cancellation are very close to the line showing the SNR before interference cancellation.
	Second, the achieved SINR after cancellation increases with the increasing values of channel correlation coefficient $\alpha$.
	This is because the higher the value of $\alpha$ is,
	the lower the variance of the channel evolutionary noise
	and the less severe the impact of the fast fading are. 	
Since the fast fading noise is less disruptive, interference cancellation performance is alleviated (as it is known that the fast fading noise causes the performance floor for the interference cancellation),
	which in turn reduces the residual interference power and makes the achieved SINR higher.

\begin{figure}[ht]  
	\centering
	\includegraphics[scale=0.37]{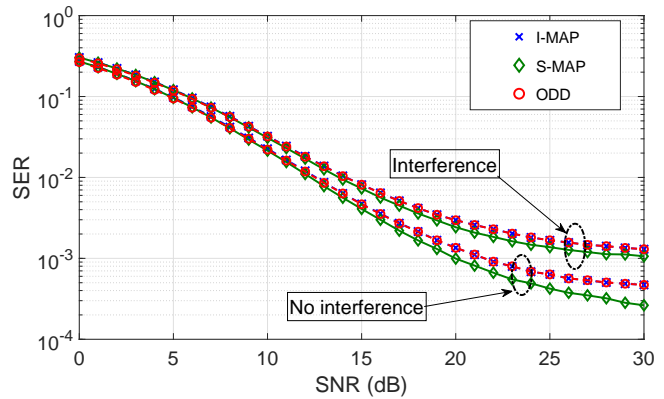} 
	\caption{SER achieved by different detection methods, $N_{\sf d}=3$}   
	\label{fig: SER}
\end{figure}

\subsection{Performance of the Proposed Symbol Detection Methods}
We now compare the SER performance of series symbol MAP detection (S-MAP), individual symbol MAP detection (I-MAP)
and optimum diversity detection (ODD) \cite{sun2014maximizing, mahamadu2018fundamental} methods.
The ODD method is the optimum individual symbol detection with imperfect CSI.
	Basically, in the ODD method, the channel gains at data positions are interpolated from the MMSE-estimated channel gains at pilot positions.
	Then, the zero-forcing based symbol detection is employed (please refer to Sections III and IV in \cite{mahamadu2018fundamental} for 
	more details).

Fig. \ref{fig: SER} illustrates the SER achieved by these detection methods for the interference-free and interference scenarios,
which are denoted as IF and IP in this section, respectively.
It can be seen that the SER of the proposed I-MAP is almost identical to that achieved by the ODD method.
Moreover, the S-MAP detector outperforms both I-MAP and ODD and
the performance gap is larger in the interference-free scenario.
Note that, in the IP scenario, the residual interference still presents, which causes the error floors in these SER curves.

\begin{figure}[ht]  
	\centering
	\includegraphics[scale=0.375]{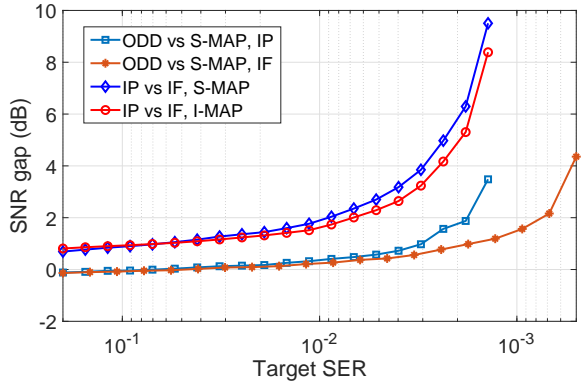} 
	\caption{SNR gap for specific target SER, $N_{\sf d}=3$}
	\label{fig: SNRgap}
\end{figure} 

For performance comparison between our methods and the existing method,
we show in Fig. \ref{fig: SNRgap} the SNR gap to achieve the same SER between different symbol detection methods (S-MAP, I-MAP) and scenarios (IF, IP).
Particularly,
	a value of 3dB SNR gap at $5 \times 10^{-3}$ target SER of the curve \textit{A vs B} means that method A needs 3dB higher in SNR to achieve the same target SER achieved by method B.
For the same scenario (IF or IP), the SNR gap between the proposed S-MAP and ODD becomes larger as the required SER decreases.
Note again that there is a performance floor in the IP scenario;
nevertheless, our proposed detection method achieves more than 3dB SNR gain compared to the existing ODD method for the 
same detection performance in the low target SER regime (see the curve with square markers).
Moreover, to achieve the same SER performance under the high reliability condition (i.e., low SER),
the SNR required in the interference scenario is much higher than that required 
in the interference free scenario (illustrated by IP vs IF curves).

\begin{figure}[ht]  
	\centering
	\includegraphics[scale=0.375]{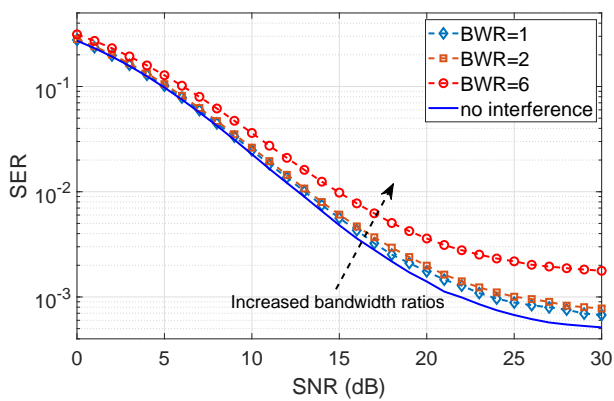} 
	\caption{SER versus SNR for different values of BWR, $N_{\sf d}=3$}
	\label{fig: BWratio}
\end{figure} 

Fig. \ref{fig: BWratio} illustrates the SER in the interference-free and interference scenarios 
for different bandwidth ratios, which is denoted as BWR.
As can be seen from this figure, higher bandwidth ratios between interfering and desired signals
lead to higher SER. This is because higher BWR creates more severe interference for the desired signal
and it is not possible to completely remove the interference due to the fast fading.

\begin{figure}[ht]  
	\centering
	\includegraphics[scale=0.375]{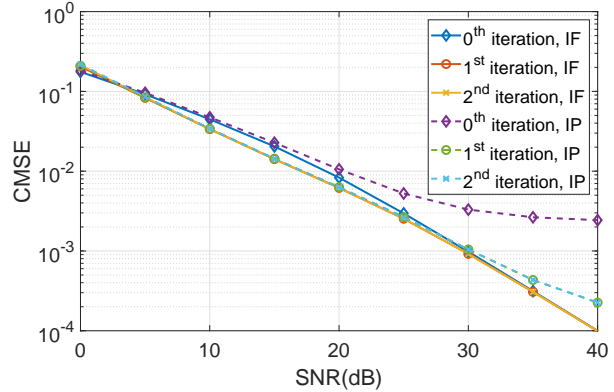}
	\caption{Performance of channel estimation for iterative algorithm} 
	\label{fig: iterative CMSE vs SNR}
\end{figure} 

\subsection{Performance of the Iterative Algorithm}
We now study the performance of the iterative algorithm for channel estimation, interference cancellation, and symbol detection.
First, we present the performance of channel estimation over iterations
in Fig. \ref{fig: iterative CMSE vs SNR} where the CMSE of estimated channel gains
is shown for both IF and IP scenarios.
As can be seen from this figure, the iterative algorithm converges\footnote{In the simulation,  the convergence is 
actually achieved when there is no change in the detected data symbols. For a better illustration, we show 
the `convergence' of the CMSE instead. This is because there is no change in the estimated channel if there is no 
change in the detected data symbols over iterations.}
after only a few iterations.
The most noticeable observation is that
the converged channel estimation performance  in the presence of interference (IP) is almost identical to that 
of the interference free scenario (IF) in the low SNR regime (less then 30dB),
which implies that the proposed iterative method cancels very well the interference in this SNR region.
When the SNR is higher than 30dB, the performance in the IP case is still limited by the fast fading noise.
However, the performance floor of the iterative channel estimation approach is much lower than that of the non-iterative counterpart (the $0^{th}$-iteration\footnote{Note that iterations are only counted when the algorithm enters the while loop. In other words, results obtained from the first and second steps in Algorithm 3 are considered at the $0^{th}$ iteration. In Algorithm 3, we choose I-MAP due to its low complexity, 
but S-MAP can also be used.} versus the $2^{nd}$-iteration curves in the IP scenario).

We now study how the SER improves over iterations.
In Fig. \ref{fig: ser_iterative}, the left and right figures show the SERs for the IF and IP cases, respectively.
It can be seen from the figure that the SER improvement is higher when the interference is present,
which suggests that the iterative algorithm estimates and cancels the interference effectively.

We show the SERs achieved by the non-iterative and iterative 
algorithms\footnote{The SER of the non-iterative algorithm is the SER computed at the $0^{th}$ iteration and the SER of the iterative algorithm is the SER achieved at convergence.}.
From Fig. \ref{fig: ser_ipif}, we can see that the iterative algorithm improves the SER in both IF and IP scenarios.
Furthermore, the improvement is higher for larger values of SNR.
This is because that the high SNR regime allows more reliable data detection, which boosts the performance of 
interference cancellation and channel estimation.

\begin{figure}[ht]  
	\centering
	\includegraphics[scale=0.355]{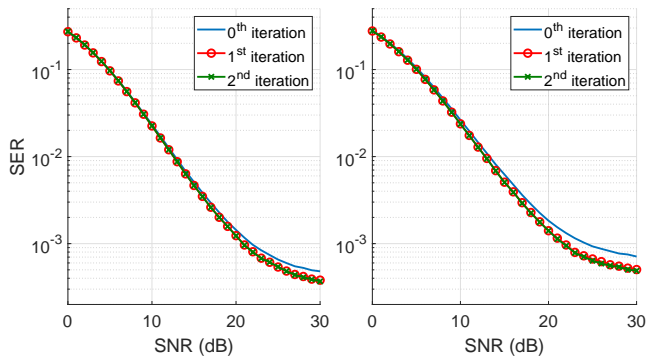} 
	\caption{SER over iterations} 
	\label{fig: ser_iterative}
\end{figure} 

\begin{figure}[ht]  
	\centering
	\includegraphics[scale=0.35]{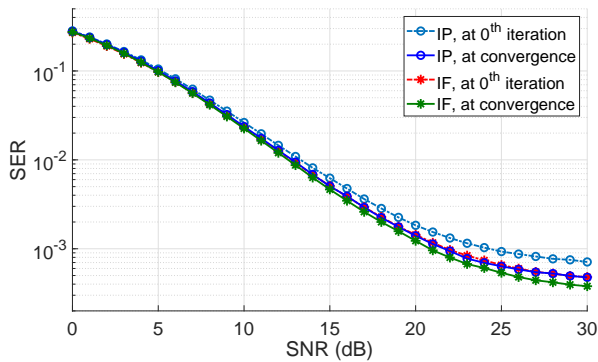} 
	\caption{SER achieved by iterative and non-iterative algorithms} 
	\label{fig: ser_ipif}
\end{figure} 

\subsection{Throughput Achieved by Proposed Framework}
In Fig. \ref{fig: throughput}, we show the variations of the throughput with the pilot density for different values of SNR $\rho$
and channel correlation coefficient $\alpha$. 
As can be seen from this figure, for given $\alpha$ and $\rho$, there exists an optimal pilot density that 
achieves the maximum throughput. Moreover, the maximum throughput increases as the SNR $\rho$ increases.
It can also be observed that larger $\alpha$ leads to higher maximum throughput and lower optimal pilot density.
This is because when the channel varies more slowly, the performance of interference cancellation and channel estimation is
improved, which results in more reliable transmission and higher throughput.
The results in this figure demonstrate the tradeoff between the throughput and communication reliability in the fast fading environment.

\begin{figure}[ht]  
	\centering
	\includegraphics[scale=0.375]{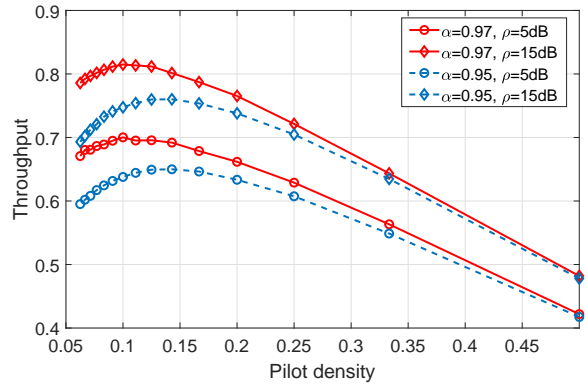} 
	\caption{Throughput variations with the pilot density}  
	\label{fig: throughput}
\end{figure} 

\section{Conclusion}  
We have proposed two frameworks for channel estimation, interference cancellation, 
and symbol detection for communication signals with different bandwidth in the fast fading environment.
Specifically, in the two-phase non-iterative framework, 
we have derived the channel estimators and studied both series and individual symbol detection methods.
The iterative framework performs interference cancellation, channel estimation and data detection
based on the detected data symbol from the previous iteration, which can improve the system performance
compared to the non-iterative counterpart.
Numerical studies have confirmed the existence of the performance floor for SER in the considered
interference scenario and illustrated the optimal pilot density to achieve the maximum throughput.
Moreover, we have shown that the series symbol detection method outperforms the existing ODD method in terms of SER while
the individual symbol detection method achieves the very close performance to the 
ODD method but with lower complexity.

\appendices
\section{Proof of Theorem 1}
To compute $p(\mathbf{h}_n,\mathbf{Y})$,
we need to find $p(\mathbf{Y}|\mathbf{h}_n)$, since
\begin{equation} \label{eq: prob h_n, Y}
p(\mathbf{h}_n, \mathbf{Y})=p(\mathbf{Y}|\mathbf{h}_n)p(\mathbf{h}_n),
\end{equation}
and $p(\mathbf{h}_n)$ is known to be $\mathcal{CN}(\mathbf{h}_n,\mathbf{0},\mathbf{I}_{N_{\sf r}})$,
where $\mathcal{CN}(\textbf{x},\boldsymbol{\mu},\boldsymbol{\Sigma})$ is the complex Gaussian density of random vector $\textbf{x}$ having mean $\boldsymbol{\mu}$ and covariance matrix $\boldsymbol{\Sigma}$ \cite{zhu2009message}.
The likelihood of $\mathbf{Y}$, given $\mathbf{h}_n$ can be factorized, thanks to the channel Markovian property, as  
\begin{equation}\label{eq: AL1_LLH}	
p(\mathbf{Y}|\mathbf{h}_n){=}p(\mathbf{y}_n|\mathbf{h}_n) \!
\prod_{i=1}^{n-1}{p(\mathbf{y}_{i}|\mathbf{y}_{i{+}1},\mathbf{h}_n)}\!\!\!
\prod_{i=n+1}^{N_{\sf p}}{\!\! p(\mathbf{y}_{i}|\mathbf{y}_{i{-}1},\mathbf{h}_n)}.
\end{equation}

Given $\mathbf{h}_n$, any two consecutive observations are correlated due to the cumulative channel evolutionary noises.
Since we consider only received signals at pilot positions, the equivalent correlation coefficient of channel gains at two consecutive pilot positions is
$\alpha_{\sf p}=\alpha^{N_{\sf d}+1}$.
To further derive $p(\mathbf{Y}|\mathbf{h}_n)$, we need to find the probabilities $p(\mathbf{y}_{i}|\mathbf{y}_{i-1},\mathbf{h}_n)$ for $i>n$
and $p(\mathbf{y}_{i}|\mathbf{y}_{i+1},\mathbf{h}_n)$ for $i<n$.

We now show the derivation of $p(\mathbf{y}_{i}|\mathbf{y}_{i-1},\mathbf{h}_n)$ for $i>n$.
From (\ref{eq: channel_model}), the channel coefficient $\mathbf{h}_i, i>n$ can be expressed with respect to
$\mathbf{h}_n$ as 
\begin{equation} \label{eq: hi to hn, i>n}
\mathbf{h}_i=\alpha_{\sf p}^{i-n}\left( \mathbf{h}_n+\eta_{\sf p}\sum_{j=1}^{i-n}{\alpha_{\sf p}^{-j}\boldsymbol{\Delta}_{n+j}}\right) , \\
\end{equation}
where $\eta_{\sf p}= \left( 1{-}\alpha_{\sf p}^2\right) ^{1/2}$.
Substituting $\mathbf{h}_n$ in (\ref{eq: hi to hn, i>n}) into (\ref{eq: Rx final}),
it can be seen that $\mathbf{y}_i$ and $\mathbf{y}_{i-1}$ share the common evolutionary noise terms 
$\boldsymbol{\Delta}_{n+j},  j=1,...,i{-}n{-}1$.
Then, we can obtain the parameters of the distribution $p(\mathbf{y}_{i},\mathbf{y}_{i-1}|\mathbf{h}_n)=$
$\mathcal{CN}\left(\! \left[ \!\! \begin{array}{lll}
\mathbf{y}_i\\
\mathbf{y}_{i-1}
\end{array}\!\!\!\right] 
\!\!,\!\!
\left[ \!\! \begin{array}{lll}
\boldsymbol{\mu}_{\mathbf{y}_i|\mathbf{h}_n}\\
\boldsymbol{\mu}_{\mathbf{y}_{i-1}|\mathbf{h}_n}
\end{array} \!\! \right]
\!\!,\!\!
\left[ \!\! \begin{array}{lll}
\boldsymbol{\Sigma}_{\mathbf{y}_i|\mathbf{h}_n} \!\!&\!\!\! \boldsymbol{\Sigma}_{\mathbf{y}_i,\mathbf{y}_{i-1}|\mathbf{h}_n}\\
\boldsymbol{\Sigma}_{\mathbf{y}_i,\mathbf{y}_{i-1}|\mathbf{h}_n}^H \!\! & \!\!\! \boldsymbol{\Sigma}_{\mathbf{y}_{i-1}|\mathbf{h}_n}
\end{array} \!\! \right]
\right) $ as follows:
\begin{equation}
\begin{split}
\boldsymbol{\mu}_{\mathbf{y}_k|\mathbf{h}_n}
&{=}\mathbf{B}_k\mathbf{c}+\alpha_{\sf p}^{k-n}\mathbf{h}_nx_k, \; k = i, i-1,\\ 
\boldsymbol{\Sigma}_{\mathbf{y}_k|\mathbf{h}_n}
&{=}\left( \sigma^2{+}\alpha_{\sf p}^{2(k-n)}\eta_{\sf p}^2\sum_{j=1}^{k-n}{\alpha_{\sf p}^{-2j}} \right)\mathbf{I}_{N_{\sf r}}\\
&{=}\left[ 1{+}\rho\left( 1-\alpha_{\sf p}^{2(k-n)}\right)  \right] \sigma^2\mathbf{I}_{N_{\sf r}}, \; k = i, i{-}1,\\
\boldsymbol{\Sigma}_{\mathbf{y}_i,\mathbf{y}_{i{-}1}|\mathbf{h}_n}
&{=} \E \left[ \left( \mathbf{y}_i{-}\boldsymbol{\mu}_{\mathbf{y}_i|\mathbf{h}_n} \right) \left( \mathbf{y}_{i-1}{-}\boldsymbol{\mu}_{\mathbf{y}_{i-1}|\mathbf{h}_n} \right)^H | \mathbf{h}_n\right]\\
&{=}x_ix_{i-1}^{*}\alpha_{\sf p}(1-\alpha_{\sf p}^{2(i-n-1)})\mathbf{I}_{N_{\sf r}}.
\end{split}
\end{equation}

Next, we apply the conditional probability formula for the multivariate Complex Circular Symmetric Gaussian vector \cite{gallager2013stochastic} (section 3.7.7, page 153) and obtain $ p(\mathbf{y}_{i}|\mathbf{y}_{i-1},\mathbf{h}_n)=
\mathcal{CN}(\mathbf{y}_{i},\boldsymbol{\mu}_{i,n},\boldsymbol{\Sigma}_{i,n})$  for  $i>n$,
where
\begin{equation}\label{eq: cond_para_after}
\begin{split}
\boldsymbol{\mu}_{i,n}
&{=}\boldsymbol{\mu}_{\mathbf{y}_i|\mathbf{h}_n}+\beta_{i,n}
\left( \mathbf{y}_{i-1}-\boldsymbol{\mu}_{\mathbf{y}_{i-1}|\mathbf{h}_n} \right), \\
\boldsymbol{\Sigma}_{i,n}
&{=}\sigma_{i,n}^{2} \mathbf{I}_{N_{\sf r}}, \;
\beta_{i,n}=\frac{x_{i}x_{i-1}^*\rho\alpha_{\sf p} \left( 1-\alpha^{2(i - n - 1)}_{\sf p}\right) }{1+\rho \left( 1-\alpha_{\sf p}^{2(i-n-1)}\right) },\\
\sigma_{i,n}^2
&{=}\sigma^2\!\!\left[\!1 {+}\rho \! \left(\! 1{-}\alpha_{\sf p}^{2(i-n)}\! \right) \! {-}
\frac{\rho^2\alpha^2_{\sf p} \left( 1{-}\alpha_{\sf p}^{2(i-n-1)}\right) ^2}{1{+}\rho \left( 1{-}\alpha_{\sf p}^{2(i-n-1)}\right) }\right].
\end{split} 
\end{equation}

For $i<n$, $p(\mathbf{y}_{i}|\mathbf{y}_{i+1},\mathbf{h}_n)=
\mathcal{CN}(\mathbf{y}_{i},\boldsymbol{\mu}_{i,n},\boldsymbol{\Sigma}_{i,n})$,
where the parameters can be expressed similarly:
\begin{equation} \nonumber
\boldsymbol{\mu}_{i,n}
{=}\alpha^{n-i}_{\sf p}\mathbf{h}_nx_{i}{+}\mathbf{B}_{i}\mathbf{c}
{+} 	\beta_{i,n}(\mathbf{y}_{i+1}{-}\alpha_{\sf p}^{n-i-1}\mathbf{h}_nx_{i+1}{-}\mathbf{B}_{i+1}\mathbf{c}),
\end{equation}
\begin{equation}\label{eq: cond_para_before}
\begin{split}
\boldsymbol{\Sigma}_{i,n}
&{=}\sigma_{i,n}^2 \mathbf{I}_{N_{\sf r}}, \; \;
\beta_{i,n} =\frac{x_{i}x_{i+1}^*\rho\alpha_{\sf p} \left(1- \alpha^{2(n-i-1)}\right) }{1+\rho \left(1- \alpha_{\sf p}^{2(n-i-1)}\right) },\\
\sigma_{i,n}^2
&{=}\sigma^2\!\!\left[\!1 {+}\rho \! \left(\! 1{-}\alpha_{\sf p}^{2(n-i)}\!\right)\!
{-}
\frac{\rho^2\alpha^2_{\sf p} \left( 1{-}\alpha_{\sf p}^{2(n-i-1)}\right) ^2}{1{+}\rho \left( 1{-}\alpha_{\sf p}^{2(n-i-1)}\right) }\right] .
\end{split}
\end{equation}

For $j=n,
\boldsymbol{\mu}_{n,n}{=} \mathbf{h}_nx_n + \mathbf{B}_n\mathbf{c}, \;
\boldsymbol{\Sigma}_{n,n}{=}\sigma^2\mathbf{I}_{N_{\sf r}}$.
Substituting the parameters in (\ref{eq: cond_para_after}) and (\ref{eq: cond_para_before}) into (\ref{eq: prob h_n, Y}) using  (\ref{eq: AL1_LLH}), taking the logarithm, we obtain the  log-likelihood function in Theorem 1. This completes the proof.

\section{Proof for the Positive-definiteness of $\mathbf{D}_n$}
For an arbitrary non-zero vector $\mathbf{z}=[z_1,...,z_L]^T$, we have
\begin{equation} \label{eq: prood_D}
\begin{split}
\mathbf{z}^H\mathbf{D}_n\mathbf{z}
&{=}\text{tr}\left[ 
\left( \sum_{i=1}^{N_{\sf p}}{\frac{\left(\mathbf{B}_{i,n}\mathbf{z} \right) ^H\left(\mathbf{B}_{i,n}\mathbf{z} \right) }{\sigma_{i,n}^2}}\right)\right] \\
-\text{tr}&\left[ \left(\sum_{i=1}^{N_{\sf p}}{\frac{x^*_{i,n}\mathbf{B}_{i,n}}{\sigma_{i,n}^2}} \mathbf{z} \right)^{\!H} \!\! \mathbf{A}_n^{-1}  \left(\sum_{i=1}^{N_{\sf p}}{\frac{x^*_{i,n}\mathbf{B}_{i,n}}{\sigma_{i,n}^2}} \mathbf{z} \right)\right]\\
&{=}\text{tr}\left[ 
\left( \sum_{i=1}^{N_{\sf p}}{\frac{\left(\mathbf{B}_{i,n}\mathbf{z} \right) \left(\mathbf{B}_{i,n}\mathbf{z} \right)^H }{\sigma_{i,n}^2}}\right)\right] \\
-\text{tr}&\left[ \left(\sum_{i=1}^{N_{\sf p}}{\frac{x^*_{i,n}\mathbf{B}_{i,n}}{\sigma_{i,n}^2}} \mathbf{z} \right)\mathbf{A}_n^{-1}  \left(\sum_{i=1}^{N_{\sf p}}{\frac{x^*_{i,n}\mathbf{B}_{i,n}}{\sigma_{i,n}^2}} \mathbf{z} \right)^{\!H}\right]\\
& {>} \text{tr}\left[ 
\left( \sum_{i=1}^{N_{\sf p}}{\frac{\left(\mathbf{B}_{i,n}\mathbf{z} \right) \left(\mathbf{B}_{i,n}\mathbf{z} \right)^H }{\sigma_{i,n}^2}}\right)\right. \\
-&
\left. \! \!
\left( \! \sum_{i=1}^{N_{\sf p}}{\frac{x^*_{i,n}\mathbf{B}_{i,n}}{\sigma_{i,n}^2}} \mathbf{z}\!
\right)
\! \!
\left( \mathbf{A}_n {-}\mathbf{I}_{N_{\sf r}}\right) ^{\! {-}1}
\!\!
\left(\! \sum_{i=1}^{N_{\sf p}}{\frac{x^*_{i,n}\mathbf{B}_{i,n}}{\sigma_{i,n}^2}} \mathbf{z} \!\!
\right)^{\!\! \!\! H}
\right]\! ,
\end{split}
\end{equation}
where $\text{tr}(\mathbf{X})$ is the sum of diagonal elements of $\mathbf{X}$.
In the last two lines of (\ref{eq: prood_D}),
the $j$th diagonal element of the first term is $\sum_{i=1}^{N_{\sf p}}{(\mathbf{b}_{i,n}^{(j)}\mathbf{z})(\mathbf{b}_{i,n}^{(j)}\mathbf{z})^H/\sigma_{i,n}^2}$,
where $\mathbf{b}_{i,n}^{(j)}$ is the $j$th row of $\mathbf{B}_{i,n}$,
and the $j$th diagonal element of the second term is $\left( \sum_{i=1}^{N_{\sf p}}{\frac{|x_{i,n}|^2}{\sigma_{i,n}^2}}\right) ^{-1}\left(\sum_{i=1}^{N_{\sf p}}{\frac{x_{i,n}^*\mathbf{b}_{i,n}^{(j)}\mathbf{z}}{\sigma_{i,n}^2}} \right)\left(\sum_{i=1}^{N_{\sf p}}{\frac{x_{i,n}^*\mathbf{b}_{i,n}^{(j)}\mathbf{z}}{\sigma_{i,n}^2}} \right)^H $, where $\mathbf{A}_n$ from (\ref{eq: step1_para}) is substituted into this term.

We now define the two vectors $\mathbf{u}$ and $\mathbf{v}$ whose $i$th elements are 
$\text{u}_i=x_{i,n}^*/\sigma_{i,n}, \text{v}_i=\mathbf{b}_{i,n}^{(j)}\mathbf{z}/\sigma_{i,n}$, respectively.
By applying the \textit{Cauchy}-\textit{Schwarz} inequality
\begin{equation} \label{eq: Cauchy}
|\mathbf{u}|^2|\mathbf{v}|^2 \geq |\mathbf{u}.\mathbf{v}|^2,
\end{equation}
it can be verified that each diagonal element of the matrix
in the last two lines
of (\ref{eq: prood_D}) is positive, which means its trace is also positive. Thus, we have completed the proof.

\section{Proof of Theorem 2}
We can reformulate $p\left( x_{1:N_{\sf d}}|\mathbf{h}_h,\mathbf{h}_t,\mathbf{y}_{1:N_{\sf d}} \right)$ as follows:
\begin{equation} \label{eq: post_pro}
\begin{split}
&p\left( x_{1:N_{\sf d}}|\mathbf{h}_h,\mathbf{h}_t,\mathbf{y}_{1:N_{\sf d}} \right)\\
&{\propto}
\int_{}^{}{p(x_{1:N_{\sf d}},\mathbf{y}_{1:N_{\sf d}},\mathbf{h}_h,\mathbf{h}_{1:N_{\sf d}},\mathbf{h}_t)
	d\mathbf{h}_{1:N_{\sf d}}}\\
& {\stackrel{(a)}{\propto}} 
\int_{}^{}{p(\mathbf{y}_{1:N_{\sf d}}|\mathbf{h}_{1:N_{\sf d}},x_{1:N_{\sf d}})p(\mathbf{h}_h,\mathbf{h}_{1:N_{\sf d}},\mathbf{h}_t)d\mathbf{h}_{1:N_{\sf d}}}\\
&{\stackrel{(b)}{\propto}}  \int_{}^{}{p(\mathbf{h}_h,\mathbf{h}_{1:N_{\sf d}},\mathbf{h}_t)\prod_{i=1}^{N_{\sf d}}{p(\mathbf{y}_i|x_i,\mathbf{h}_i)}
	d\mathbf{h}_{1:N_{\sf d}}}\\
&{\stackrel{(c)}{\propto}} \int_{}^{}{p(\mathbf{h}_1|\mathbf{h}_h)p(\mathbf{h}_t|\mathbf{h}_{N_{\sf d}})
	\prod_{i=2}^{N_{\sf d}}{p(\mathbf{h}_i|\mathbf{h}_{i-1})}}\\
& \quad \quad \quad \quad \quad \quad \quad \quad \quad \quad
	\prod_{i=1}^{N_{\sf d}}{p(\mathbf{y}_i|x_i,\mathbf{h}_i)}
	d\mathbf{h}_{1:N_{\sf d}}\\
&{\stackrel{(d)}{\propto}} \; e^{\mathcal{F}}\int_{}{}
{exp\left\lbrace - \sum _{i=1}^{N_{\sf d}} {(\mathbf{h}_i-\mathbf{a}_i)^H\mathbf{S}_i^{-1}(\mathbf{h}_i-\mathbf{a}_i)}
	\right\rbrace d\mathbf{h}_{1:N_{\sf d}}}\\
&{\stackrel{(e)}{\propto}} \; e^{\mathcal{F}} \prod_{i=1}^{N_{\sf d}}{|\mathbf{S}_i|}, \\
\end{split}
\end{equation}
where
$
\boldsymbol{\Gamma}_{i,j}=\tau_2^{i-j}\prod_{k=j}^{i-1}{\mathbf{S}_k},
\tau_1=\frac{1}{(1-\alpha^2)\sigma^2_{\sf h}}, \tau_2=\alpha \tau_1,
$ and
$\mathcal{F}$ is defined in (\ref{eq: F}).
Assuming all points in the constellation are transmitted with equal
probability, 
the conditional probabilities in (\ref{eq: post_pro}) are transformed by
Baye's rule $(a)$ and the Markovian property of channel $(b,c)$,
where these expressions can be obtained by iteratively synthesizing quadratic terms of $\mathbf{h}_i,i=1,...,N_{\sf d}$ in the exponents $(d,e)$.

\section{Proof of Proposition 1}
The channel estimation error can be written as (see \eqref{eq: step1_para}, \eqref{eq: xyB})
\begin{equation} \label{eq: CEN decomposed}
\begin{split}
\boldsymbol{\nu}_n&= \mathbf{h}_n-\mathbf{A}_n^{-1}\left( \sum_{i=1}^{N_{\sf p}}{\frac{x_{i,n}^*}{\sigma_{i,n}^2}}\left( \mathbf{y}_i-\beta_{i,n}\mathbf{y}_{i+j_{i,n}} \right)  \right)\\
&= \boldsymbol{\nu}_n^{\sf g}+\boldsymbol{\nu}_n^{\sf c},
\end{split}
\end{equation}
where $\boldsymbol{\nu}_n^{\sf g}$ is the error due to the AWGN, and $\boldsymbol{\nu}_n^{\sf c}$ is the error due to the channel evolutionary noise. Specifically,

\begin{equation} \label{eq: CEN decomposed multipliers}
\begin{split}
\boldsymbol{\nu}_n^{\sf g}&=\sum_{i=1}^{N_{\sf p}}{\boldsymbol{\Xi}_{i,n}^{\sf g}\mathbf{w}_i},\\ 
\boldsymbol{\nu}_n^{\sf c}&=\boldsymbol{\Xi}_{0,n}^{\sf c}\mathbf{h}_0+ \sum_{i=1}^{N_{\sf p}}{\boldsymbol{\Xi}_{i,n}^{\sf c}\boldsymbol{\Delta}_i},
\end{split}
\end{equation}
where we decompose $\mathbf{h}_n$ into $\alpha_{\sf p}^n\left(\mathbf{h}_0+\sum_{i=1}^{n}{\alpha_{\sf p}^{-i}\boldsymbol{\Delta}_i}\right) $.
By using this decomposition, it is more convenient to compute the channel estimation error components due to the channel evolutionary noise.
Otherwise, one has to determine the dependence structure of $\mathbf{h}_n$ on the preceding channel noise components $\boldsymbol{\Delta}_i, i<n$, which is not trivial.

When the desired channels are independent, $\boldsymbol{\Xi}_{i,n}^{\sf g}=\xi_{i,n}^{\sf g}\mathbf{I}_{N_{\sf r}}$ and  $\boldsymbol{\Xi}_{i,n}^{\sf c}=\xi_{i,n}^{\sf c}\mathbf{I}_{N_{\sf r}}$.
Substituting $\mathbf{y}_i=\mathbf{h}_ix_i+\mathbf{w}_i$ into (\ref{eq: CEN decomposed}), we have

\begin{equation} \nonumber
\xi_{i,n}^{\sf g}{=}\left\{
\begin{array}{ll}
\!\!\!\!{-}\frac{x_{i,n}^*}{a_n\sigma_{i,n}^2}, & \!\! \!\!i{=}1{,}n{,}n{\pm} 1{,}N_{\sf p},\\  
\!\!\!\!{-}\frac{x_i^*}{a_n}\left( \frac{\omega_{i,n}}{\sigma_{i,n}^2}{+}\frac{|\beta_{n,i-j_{i,n}}|\omega_{n,i-j_{i,n}}}{\sigma_{n,i-j_{i,n}}^2}\right)\!,&\!\! \!\!\text{otherwise}.
\end{array}
\right.
\end{equation}

Hence, the AWGN contributes to the CEE with the total power of $\sigma^2\sum_{i=1}^{N_{\sf p}}{|\xi_{i,n}^{\sf g}|^2}$.
As the SNR goes to infinity, $\text{lim}_{\rho \to \infty}\sum_{i=1}^{N_{\sf p}}{|\xi_{i,n}^{\sf g}|^2}=1$,  and the AWGN 
contributes $\sigma^2$ to the overall CEE.
Besides, $\boldsymbol{\nu}_n^{\sf c}$ is expressed in (\ref{eq: CEN decomposition c multipliers}), and we can write the multipliers  $\xi_{i,n}^{\sf c}$ as follows:
\begin{equation}
\begin{split}
\xi_{0,n}^{\sf c}&=\alpha_{\sf p}^n-\sum_{i=1}^{N_{\sf p}}{\frac{\omega_{i,n}\alpha_{\sf p}^i}{a_n\sigma_{i,n}^2}(1-|\beta_{i,n}|\alpha_{\sf p}^{j_{i,n}})},\\
\xi_{i,n}^{\sf c}&=\frac{\alpha_{\sf p}^{-i}}{a_n}
\left(
\sum_{k=i-1}^{n-1}{\frac{\omega_{k,n}}{\sigma_{k,n}^2}|\beta_{k,n}|\alpha_{\sf p}^{k+1}}
-\sum_{k=i}^{N_{\sf p}}{\frac{\omega_{k,n}}{\sigma_{k,n}^2}\alpha_{\sf p}^{k}}
\right. \\
& \quad \quad  \left. + \alpha_{\sf p}^n +\sum_{k=n+1}^{N_{\sf p}}{\frac{\omega_{k,n}}{\sigma_{k,n}^2}|\beta_{k,n}|\alpha_{\sf p}^{k-1}}
\right), \; \; i\leq  n,\\
\xi_{i,n}^{\sf c}&=\frac{\alpha_{\sf p}^{-i}}{a_n}
\left(
\sum_{k=i+1}^{N_{\sf p}}{\frac{\omega_{k,n}}{\sigma_{k,n}^2}|\beta_{k,n}|\alpha_{\sf p}^{k-1}}
-\sum_{k=i}^{N_{\sf p}}{\frac{\omega_{k,n}}{\sigma_{k,n}^2}\alpha_{\sf p}^{k}}
\right.\\
& \quad \left. -\frac{1}{\sigma^2}\mathds{1}_{n<N}\mathds{1}_{i=N}
\right), \quad i>n.
\end{split}
\end{equation}

\newcounter{tempequationcounter_3}
\begin{figure*}[ht]
	\normalsize
	\setcounter{tempequationcounter_1}{\value{equation}}
	\begin{equation} \label{eq: CEN decomposition c multipliers}
	\begin{split}
	\boldsymbol{\nu}_n^{\sf c}&=\mathbf{h}_n
	-\mathbf{A}_n^{-1} \left\lbrace \sum_{i=1}^{N_{\sf p}}{\frac{\omega_{i,n}\alpha_{\sf p}^i}{\sigma_{i,n}^2}
		\left[ 
		\mathbf{h}_0+
		\sum_{k=1}^{i}{\alpha_{\sf p}^{-k}\boldsymbol{\Delta}_{k}} 
		-|\beta_{i,n}|\alpha_{\sf p}^{j_{i,n}} \left(\mathbf{h}_0+
		\sum_{k=1}^{i+j_{i,n}}{\alpha_{\sf p}^{-k}\boldsymbol{\Delta}_{k}}\right) 
		\right]  }\right\rbrace \\
	&=\mathbf{h}_0\left( \alpha_{\sf p}^n-\sum_{i=1}^{N_{\sf p}}{\frac{\omega_{i,n}\alpha_{\sf p}^i}{a_n\sigma_{i,n}^2}(1-|\beta_{i,n}|\alpha_{\sf p}^{j_{i,n}})} \right)
	-\sum_{k=1}^{N_{\sf p}}{\left( \boldsymbol{\Delta}_k\alpha_{\sf p}^{-k}\sum_{i=k}^{N_{\sf p}}{\frac{\omega_{i,n}}{a_n\sigma_{i,n}^2}\alpha_{\sf p}^{i}}\right) }
	+\sum_{i=1}^{n}{\alpha_{\sf p}^{n-i}\boldsymbol{\Delta}_i} \\
	& +\sum_{k=1}^{n}{\left( \boldsymbol{\Delta}_k\alpha_{\sf p}^{-k}\sum_{i=k-1}^{n-1}{\frac{\omega_{i,n}}{a_n\sigma_{i,n}^2}|\beta_{i,n}|\alpha_{\sf p}^{i+1}}\right) }
	+\sum_{k=1}^{N_{\sf p}-1}{\left( \boldsymbol{\Delta}_k\alpha_{\sf p}^{-k}\sum_{i=k+1,i>n}^{N_{\sf p}}{\frac{\omega_{i,n}}{a_n\sigma_{i,n}^2}|\beta_{i,n}|\alpha_{\sf p}^{i-1}}\right) }\\
	&=\mathbf{h}_0\left( \alpha_{\sf p}^n-\sum_{i=1}^{N_{\sf p}}{\frac{\omega_{i,n}\alpha_{\sf p}^i}{a_n\sigma_{i,n}^2}\left( 1-|\beta_{i,n}|\alpha_{\sf p}^{j_{i,n}}\right) } \right)
	+
	\sum_{k=n+1}^{N-1}{
		\left[ 
		\boldsymbol{\Delta}_k\frac{\alpha_{\sf p}^{-k}}{a_n} \left(
		\sum_{i=k+1}^{N_{\sf p}}{\frac{\omega_{i,n}}{\sigma_{i,n}^2}|\beta_{i,n}|\alpha_{\sf p}^{i-1}}
		-\sum_{i=k}^{N_{\sf p}}{\frac{\omega_{i,n}}{\sigma_{i,n}^2}\alpha_{\sf p}^{i}}
		\right) 
		\right] }
	\\
	&  {+}\sum_{k=1}^{n}{ \!
		\left[ \!
		\boldsymbol{\Delta}_k\frac{\alpha_{\sf p}^{-k}}{a_n} \! \! \left(
		\sum_{i=k{-}1}^{n-1}{\! \! \frac{\omega_{i,n}}{\sigma_{i,n}^2}|\beta_{i,n}|\alpha_{\sf p}^{i+1}}
		{-}\!
		\sum_{i=k}^{N_{\sf p}}{\frac{\omega_{i,n}}{\sigma_{i,n}^2}\alpha_{\sf p}^{i}}
		{+}\sum_{i=n+1}^{N_{\sf p}}{\frac{\omega_{i,n}}{\sigma_{i,n}^2}|\beta_{i,n}|\alpha_{\sf p}^{i-1}}
		\right) 
		\right] }
	{+}\sum_{k=1}^{n}{\boldsymbol{\Delta}_k\alpha_{\sf p}^{n-k}}
	{-}\boldsymbol{\Delta}_N\frac{\omega_{N,n}}{a_n\sigma_{N,n}^2}\mathds{1}_{n<N}.
	\end{split}
	\end{equation}
	\setcounter{equation}{\value{tempequationcounter_1}}
	\hrulefill	
\end{figure*}
\addtocounter{equation}{1}

As SNR goes to infinity, $\text{lim}_{\rho \to \infty}\sum_{i=1}^{N_{\sf p}}{|\xi_{i,n}^{\sf c}|^2}=0$, the channel evolutionary noises contribute negligible power to the CEE. This completes the proof.

\section{Proof of Proposition 2}
Substituting $\mathbf{y}_{i,n}$ from (\ref{eq: xyB}) into (\ref{eq: c_tilde}), note that $\mathbf{y}_i=\mathbf{h}_ix_i+\mathbf{B}_i\mathbf{c}+\mathbf{w}_i$, and after some manipulations, we have
\begin{equation} \label{eq: res_noise}
\begin{split}
\tilde{\mathbf{c}}_n &=\mathbf{c}+\mathbf{D}_n^{-1}\sum_{i=1}^{N_{\sf p}}{\mathbf{G}_{i,n}\left( \mathbf{h}_ix_i+\mathbf{w}_i\right) },
\end{split}
\end{equation}
where $\mathbf{K}_n=\left( \sum_{i=1}^{N_{\sf p}}{x_{i,n}^*\boldsymbol{\Sigma}_{i,n}^{-1}\mathbf{B}_{i,n}}\right)^H\mathbf{A}_n^{-1}, \mathbf{J}_{i,n}=\left( \mathbf{B}^H_{i,n}-\mathbf{K}_nx_{i,n}^*\right) \boldsymbol{\Sigma}_{i,n}^{-1}$, and
\begin{equation}
\mathbf{G}_{i,n}=\left\{
\begin{array}{ll}
\mathbf{J}_{i,n} , &i{=}1,n,N_{\sf p}\\
\mathbf{J}_{i,n}
-\mathbf{J}_{i-1,n}\beta_{i-1,n} , &n>i>1\\
\mathbf{J}_{i,n}
-\mathbf{J}_{i+1,n}\beta_{i+1,n} , &n<i<N_{\sf p}.
\end{array}
\right.
\end{equation}

The second term in (\ref{eq: res_noise}) represents the estimation error of $\mathbf{c}$ at position $n$,
and it is independent of $\mathbf{c}$. This completes the proof for the first part of the proposition.

Additionally, it can be seen that the estimation error is a linear combination of zero-mean Gaussian random variables, hence it also has zero mean. Therefore, the estimation is unbiased.
The covariance matrix of the residual interference at position $n$th is
\begin{equation} \label{eq: covar_c}
\begin{split}
&(\sigma^2_{\sf h}{+}\sigma^2)
\E_{(\mathbf{x},\mathbf{B})}
\left[
\mathbf{B}_n\mathbf{D}_n^{-1}\left( \sum_{i=1}^{N_{\sf p}}{\mathbf{G}_{i,n}\mathbf{G}_{i,n}^H}\right) \mathbf{D}_n^{-1}\mathbf{B}_n^H \right]\!\!{+}\\
&\sigma^2_{\sf h} \sum_{i \neq j}^{}{\E_{(\mathbf{x},\mathbf{B},\mathbf{h}_i,\mathbf{h}_j\!)}\!\!
	\left[ \!
	\mathbf{B}_n\mathbf{D}_n^{-1}\!\!
	\left(\!
	\mathbf{G}_{i,n}\mathbf{h}_i\mathbf{h}_j^H\mathbf{G}_{j,n}^Hx_ix_j^* \!
	\right) \!
	\mathbf{D}_n^{-1}\mathbf{B}_n^H
	\right]}\\
&{=}(\sigma^2_{\sf h}{+}\sigma^2)
\E_{(\mathbf{x},\mathbf{B})}
\left[
\mathbf{B}_n\mathbf{D}_n^{-1}\left( \sum_{i=1}^{N_{\sf p}}{\mathbf{G}_{i,n}\mathbf{G}_{i,n}^H}\right) \mathbf{D}_n^{-1}\mathbf{B}_n^H \right]\!\!{+}\\
& \; \sigma^2_{\sf h}
\sum_{i \neq j}^{}{\!
	\alpha_{\sf p}^{|i{-}j|}\E_{(\mathbf{x},\mathbf{B})} \left[ \mathbf{B}_n\mathbf{D}_n^{-1}\left(\mathbf{G}_{i,n}\mathbf{G}_{j,n}^Hx_ix_j^*\right) \mathbf{D}_n^{-1}\mathbf{B}_n^H \right]
}.
\end{split}
\end{equation}

Deriving the closed-form expression for the covariance matrix is tedious. Hence, we will prove Proposition 2 by using
the following arguments. First, note that, $\mathbf{D}_n$ is Hermitian, positive definite, and in quadratic order of 
interfering matrices $\mathbf{B}_i, i=1,...,N$. Second, $\mathbf{B}_n\mathbf{G}_{i,n}$ is also in quadratic order of interfering matrices. Therefore, the expected covariance matrix is in a fractional function form with total zero-th order of $\mathbf{B}_i$. 
As a result, the residual interference  power is bounded as we increase the interference power to infinity. Furthermore, 
if
all interfering channel coefficients for different antennas have identical value, the residual interference power is \textit{completely} independent of the interference power.
Since estimation errors for EICs at any symbol position $n$ are finite, the overall estimation error for EICs is also finite.
This completes the proof.

\section{Proof of Proposition 3}
Since the expression of the power of residual interference in (\ref{eq: covar_c}) contains $\sigma^2_{\sf h}$, it does not vanish as $\rho \to \infty$.
As $\rho$ goes to infinity, we have

\begin{equation}
\begin{split}
\omega_{i,n}
&{\to} \left\{
\begin{array}{ll}
\!\! \alpha_{\sf p}, \!\!\!\! &i=n \pm 1\\
\!\! 1,              \!\!\!\! & i=n\\  
\!\! 0,              \!\!\!\! &\text{otherwise}\\  
\end{array}
\right.\!\!\!,
|\beta_{i,n}|
\to \left\{
\begin{array}{ll}
\!\! 0,              \!\!\!\! &i=n,n\pm 1\\
\!\! \alpha_{\sf p}, \!\!\!\! &\text{otherwise}\\
\end{array}
\right.\!\!\!,
\\
\sigma_{i,n}^2
&{\to} \left\{
\begin{array}{ll}
\sigma^2,  &i=n\\
1-a^{2}_{\sf p}, &\text{otherwise}
\end{array}
\right.,
\\
\mathbf{A}_n&{\to} \rho\mathbf{I}_{N_{\sf r}},
\\
\mathbf{D}_n
& {\to} \mathbf{B}_{n}^H\mathbf{B}_{n}+\sum_{i\neq n}^{N_{\sf p}}{\frac{\mathbf{B}_{i,n}^H\mathbf{B}_{i,n}}{1-\alpha_{\sf p}^2}}
+2\frac{\alpha_{\sf p}^2}{1-\alpha_{\sf p}^2}\mathbf{B}_n^H\mathbf{B}_n\\
&\;
-\frac{\alpha_{\sf p}}{1-\alpha_{\sf p}^2}\sum_{i=n \pm 1}^{}{\left( x_i^*x_n\mathbf{B}^H_{n}\mathbf{B}_{i}+
	x_ix_n^*\mathbf{B}_i^H\mathbf{B}_{n}
	\right) }\\
& {\to}  N_{\sf r}N_{\sf p}\frac{1+\alpha_{\sf p}^2}{1-\alpha_{\sf p}^2} \mathbf{I}_L,
\\
\mathbf{K}_n&\to x_n\mathbf{B}_n^H,
\\
\mathbf{J}_{n,n}
&{\to} \mathbf{B}^H_n\frac{1+\alpha_{\sf p}^2}{1-\alpha_{\sf p}^2}-\frac{x_n^*\alpha_{\sf p}}{1-\alpha_{\sf p}^2}\sum_{i=n\pm 1}^{}{x_i\mathbf{B}_i^H},
\\
\mathbf{J}_{i,n}
&{\to} \frac{\mathbf{B}_{i,n}^H-x_nx_{i}^*\alpha_{\sf p}\mathbf{B}_n^H}{1-\alpha_{\sf p}^2}, i\neq n,
\\
\mathbf{J}_{i,n}\mathbf{J}_{i,n}^H
& {\to} N_{\sf r}\left( \frac{1+|\beta_{i,n}|^2+\omega_{i,n}^2}{\sigma_{i,n}^4}\right)\mathbf{I}_L\\
&{\to}\frac{ N_{\sf r}\left( 1+\alpha_{\sf p}^2\right) }{(1-\alpha_{\sf p}^2)^2}\mathbf{I}_L , i \neq n,
\\
\mathbf{G}_{i,n}\mathbf{G}_{i,n}^H
& {\to} \mathbf{J}_{i,n}\mathbf{J}_{i,n}^H{+}
|\beta_{i\pm 1,n}|^2\mathbf{J}_{i\pm 1,n}\mathbf{J}_{i\pm 1,n}^H {+}
\frac{2N_{\sf r}|\beta_{i\pm 1,n}|^2}{\sigma_{i,n}^2\sigma_{i\pm 1,n}^2}\mathbf{I}_L\\
& {\to}  N_{\sf r}\frac{ 1+\alpha_{\sf p}^4+4\alpha_{\sf p}^2 }{(1-\alpha_{\sf p}^2)^2}\mathbf{I}_L, i \neq n.
\end{split}
\end{equation}

Upon having these asymptotic values, we substitute these values into (\ref{eq: covar_c}) to arrive at the residual interference 
power limit stated in the proposition, note that the computation of $\E_{(\mathbf{x},\mathbf{B})} \left[ \mathbf{B}_n\mathbf{D}_n^{-1}\left(\mathbf{G}_{i,n}\mathbf{G}_{j,n}^Hx_ix_j^*\right) \mathbf{D}_n^{-1}\mathbf{B}_n^H \right], i \neq j,$ is done similarly.
This completes the proof.

\bibliographystyle{IEEEtran}
\bibliography{main}

\end{document}